\theoremstyle{plain}
\newtheorem{theorem}{Theorem}[section]
\newtheorem{proposition}[theorem]{Proposition}
\newtheorem{lemma}[theorem]{Lemma}
\newtheorem{corollary}[theorem]{Corollary}
\theoremstyle{definition}
\newtheorem{remark}{Remark}[section]
\newcommand{\I}{\textnormal{I}}
\newcommand{\II}{\textnormal{I\hspace*{-0.4ex}I}}
\newcommand{\III}{\textnormal{I\hspace*{-0.4ex}I\hspace*{-0.4ex}I}}
\newcommand{\Ip}{\textnormal{I\hspace*{-0.3ex}}^\prime}
\newcommand{\IIp}{\textnormal{I\hspace*{-0.4ex}I\hspace*{-0.3ex}}^\prime}
\newcommand{\B}{\mathcal{B}}
\newcommand{\C}{\mathcal{C}}
\newcommand{\D}{\mathcal{D}}
\DeclareMathOperator{\Tor}{Tor}
\DeclareMathOperator{\Res}{Res}
\DeclareMathOperator{\Tr}{Tr}
\begin{document}
    
    \title{Self-Dual Linear Codes over $\mathbb{F}_{q}+u\mathbb{F}_{q}+u^2\mathbb{F}_{q}$   and Their Applications in the Study  of   Quasi-Abelian Codes\thanks{P. Choosuwan was partially supported by the Faculty of Science and Technology, Rajamangala University of Technology Thanyaburi (RMUTT), Pathum Thani, THAILAND. S. Jitman was supported   by the Thailand Research Fund and  Silpakorn University under Research Grant RSA6280042}}
    
    \author{Parinyawat Choosuwan\thanks{P. Choosuwan  is with the Department of Mathematics and Computer Science, Faculty of Science and Technology,
            Rajamangala University of Technology Thanyaburi (RMUTT), Pathum Thani 12110, THAILAND. Email: {parinyawat\_c@rmutt.ac.th}. }  and Somphong  Jitman\thanks{ 
            S. Jitman (Corresponding Author)  is with the  Department of Mathematics, Faculty of Science, Silpakorn University,
            Nakhon Pathom 73000, Thailand.
            Email: {sjitman@gmail.com}.
    }}
    
    \maketitle

 \begin{abstract}  
     Self-dual codes over finite fields and over some finite rings have been of interest and extensively studied due to their nice algebraic structures and wide applications. Recently, characterization and enumeration of Euclidean self-dual linear codes over the ring~$\mathbb{F}_{q}+u\mathbb{F}_{q}+u^2\mathbb{F}_{q}$  with    $u^3=0$ have been  established. 
    In this paper,  Hermitian self-dual linear codes  over  $\mathbb{F}_{q}+u\mathbb{F}_{q}+u^2\mathbb{F}_{q}$ are   studied  for  all  square prime powers~$q$. 
    Complete  characterization and enumeration of such codes are given.    
    Subsequently,  algebraic characterization  of $H$-quasi-abelian codes  in $\mathbb{F}_q[G]$ is studied, where  $H\leq G$ are finite abelian groups and  $\mathbb{F}_q[H]$ is a principal ideal group algebra. General characterization and enumeration of  $H$-quasi-abelian codes and self-dual $H$-quasi-abelian  codes  in $\mathbb{F}_q[G]$   are given. 
    For  the special case where the field characteristic is $3$,   an explicit formula for the number    of    self-dual $A\times \mathbb{Z}_3$-quasi-abelian codes   in $\mathbb{F}_{3^m}[A\times \mathbb{Z}_3\times B]$   is  determined  for all finite abelian groups $A$ and $B$ such that $3\nmid |A|$  as well as their construction.  Precisely, such codes  can be  represented  in terms of  linear codes and  self-dual linear  codes over     $\mathbb{F}_{3^m}+u\mathbb{F}_{3^m}+u^2\mathbb{F}_{3^m}$.
    \\ \textbf{Keywords.} Hermitian self-dual linear codes,  quasi-abelian codes, finite chain rings, group algebras
    \\
    \textbf{2010 AMS Classification.}  94B15, 94B05, 94B60
\end{abstract}

%----------------------------------------------------------------

\section{Introduction}

Self-dual  linear codes over finite fields  form an  interesting class of  linear codes that have been extensively  studied   due to their nice  algebraic structures and wide applications (see \cite{JLX2011}, \cite{JLLX2012},\cite{JLS2013}, \cite{RS1998} and references therein).  
Codes over finite rings have been of interest  after   it was shown that some binary non-linear codes such as the Kerdock, Preparata  and Goethal codes are the Gray images of linear codes over $\mathbb{Z}_4$ in \cite{HKCSS1994}. In general,  families of linear  codes and self-dual linear  codes over finite chain rings are now become of interest.  In \cite{NNW2009},  the mass formula for Euclidean self-dual linear codes over $\mathbb{Z}_{p^3}$ has been studied. Characterization and enumeration of    Euclidean self-dual linear codes over the ring $\mathbb{F}_{q}+u\mathbb{F}_{q}+u^2\mathbb{F}_{q}$  with $u^3=0$   have been given in \cite{BNV2018}.

Algebraically  structured codes  over finite fields such as cyclic codes, abelian codes  and quasi-abelian codes      are another  important family of linear codes  that have been extensively studied for both  theoretical and practical reasons  (see  \cite{SD1967}, \cite{JLX2011},    \cite{JL2015}, \cite{JLLX2012},   \cite{JLS2013}     and references therein).
In \cite{JL2015},     $H$-quasi-abelian codes and self-dual   $H$-quasi-abelian codes  in $\mathbb{F}_q [G]$   have been studied in the case where $\mathbb{F}_q[H]$ is semisimple 

To the best of our knowledge, Hermitian self-dual linear codes over   $\mathbb{F}_{q}+u\mathbb{F}_{q}+u^2\mathbb{F}_{q}$   and non-semisimple $H$-quasi-abelian codes  in $\mathbb{F}_q [G]$   have not been well studied. 
The  goals of this paper are to investigate  the following families of linear codes and their links.  1) Hermitian self-dual linear codes over $\mathbb{F}_{q}+u\mathbb{F}_{q}+u^2\mathbb{F}_{q}$ where   $q$ is a square prime power. 2)  $H$-quasi-abelian codes and self-dual $H$-quasi-abelian codes in group algebras $\mathbb{F}_q[G]$, where $H\leq G$ are finite abelian groups and  $\mathbb{F}_q[H]$ is a principal ideal group algebra.  
In Section $2$,  some results on linear codes and  Euclidean self-dual linear codes over $\mathbb{F}_{q}+u\mathbb{F}_{q}+u^2\mathbb{F}_{q}$ are recalled. In Section $3$, characterization and enumeration Hermitian self-dual linear codes  of length~$n$ over $\mathbb{F}_{q}+u\mathbb{F}_{q}+u^2\mathbb{F}_{q}$ are established  for all square prime powers $q$.   In Section 4, the study of $H$-quasi-abelian codes  in  $\mathbb{F}_q[G]$ is given, where $\mathbb{F}_q[H]$ is a principal ideal group algebra. 
In the special case where the field characteristic is $3$,   the characterization and enumeration of  $A\times \mathbb{Z}_3$-quasi-abelian codes   and self-dual $A\times \mathbb{Z}_3$-quasi-abelian codes in $\mathbb{F}_{3^m}[A\times \mathbb{Z}_3\times B]$  are completely determined  in terms of  linear and self-dual linear codes over     $\mathbb{F}_{3^m}+u\mathbb{F}_{3^m}+u^2\mathbb{F}_{3^m}$ obtained in Section $3$ for all finite abelian groups $A$ and $B$ such that $3\nmid |A|$. Summary and remarks are given in Section $5$.

\section{Preliminaries}\label{sec:pre}
In this section,  basic results on linear codes and Euclidean   self-dual linear codes over rings are recalled.

\subsection{Linear Codes over $\mathbb{F}_{q}+u\mathbb{F}_{q}+\dots +u^{e-1}\mathbb{F}_{q}$}

For a prime power $q$, denote by $\mathbb{F}_{q}$ the finite field of order $q$.  Let $\mathbb{F}_{q}+u\mathbb{F}_{q}+\dots +u^{e-1}\mathbb{F}_{q}:=\{a_0+ua_1+\dots+u^{e-1}a_{e-1}\mid a_i\in \mathbb{F}_{q} \text{ for all } 0\leq i<e\}$ be a ring, where the addition and  multiplication are  defined  as in the usual polynomial ring over $\mathbb{F}_{q}$ with indeterminate $u$ together with  the condition $u^e=0$. It is easily seen that   $\mathbb{F}_{q}+u\mathbb{F}_{q}+\dots+u^{e-1}\mathbb{F}_{q}$  is isomorphic to $\mathbb{F}_{q}[u]/\langle u^e \rangle$ as rings.  
The  {\em Galois extension} of $\mathbb{F}_{q}+u\mathbb{F}_{q}+\dots+u^{e-1}\mathbb{F}_{q}$  of degree $m$ is defined to be the quotient ring $(\mathbb{F}_{q}+u\mathbb{F}_{q}+\dots+u^{e-1}\mathbb{F}_{q})[x]/\langle f(x)\rangle$, where $f(x)$ is an irreducible polynomial of degree $m$ over $\mathbb{F}_{q}$. It is not difficult to see that  the Galois extension of $\mathbb{F}_{q}+u\mathbb{F}_{q}+\dots +u^{e-1}\mathbb{F}_{q}$  of degree $m$ is isomorphic to  $\mathbb{F}_{q^{m}}+u\mathbb{F}_{q^{m}}+\dots+u^{e-1}\mathbb{F}_{q^{m}}$.  The ring  $\mathbb{F}_{q}+u\mathbb{F}_{q}+\dots +u^{e-1}\mathbb{F}_{q}$ is a finite chain ring with maximal ideal  $\langle u\rangle $, nilpotency index $e$  and residue field $\mathbb{F}_q$.
In addition, if $q$ is a square, the mapping    $\bar{~ }:\mathbb{F}_q\to \mathbb{F}_q$ defined by $a\mapsto a^{\sqrt{q}} $ is a field  automorphism  on $\mathbb{F}_{q}$ of order $2$. Extend  $\bar{~ }$  to be a ring automorphism of order $2$ on $\mathbb{F}_{q}+u\mathbb{F}_{q}+\dots +u^{e-1}\mathbb{F}_{q}$ of the form 
$\overline{a_0+ua_1+\dots+u^{e-1}a_{e-1}}= \overline{a_0} +u\overline{a_1} +\dots+u^{e-1}\overline{a_{e-1} }$.

Let $n$ be  a positive integer  and  let $R$ be a finite ring. The {\em Euclidean inner product} of $\boldsymbol{u}=(u_{0},u_{1},\ldots ,u_{n-1})$ and $\boldsymbol{v}= (v_{0},v_{1},\ldots ,v_{n-1})$ in $R^n$ is defined to be
\begin{align*}
\langle \boldsymbol{u},\boldsymbol{v}\rangle_{\rm E}:=\displaystyle\sum_{i=0}^{n-1} u_iv_i. 
\end{align*}
In the case where $q$ is a square and $R\in \{ \mathbb{F}_q,\mathbb{F}_{q}+u\mathbb{F}_{q}+\dots +u^{e-1}\mathbb{F}_{q}\}$, the {\em Hermitian inner product} of $\boldsymbol{u}$ and $\boldsymbol{v}$ in $R^n$   is defined to be
\begin{align*}
\langle \boldsymbol{u},\boldsymbol{v}\rangle_{\rm H}:=\displaystyle\sum_{i=0}^{n-1} u_i\overline{v_i}.
\end{align*}

A {\em linear code} $\C$ of length $n$ over the ring~$R$ is defined to be  an $R$-submodule of the $R$-module~$R^n$. A matrix $G$ over $R$ is called a {\em generator matrix} for~$C$ if the rows of $G$ generate all the elements of $\C$ and none of the rows can be written as a linear combination of the others. Linear codes~$\C_1$ and $\C_2$  over $R$ are said to be {\em equivalent} if there exists a monomial matrix $P$ such that $\C_2 = \C_1 P:= \{\boldsymbol{c} P \mid  \boldsymbol{c} \in \C_1\}$.  Denote by 
$\C^{\perp_{\rm E}}=\{ \boldsymbol{v}\in R^n \mid  \langle \boldsymbol{u},\boldsymbol{v}\rangle_{\rm E}=0 \}$ and $\C^{\perp_{\rm H}}=\{ \boldsymbol{v}\in R^n \mid  \langle \boldsymbol{u},\boldsymbol{v}\rangle_{\rm H}=0 \}$ the {\em Euclidean} and {\em Hermitian duals} of $\C$, respectively.  A linear  code $\C$ is said to be {\em Euclidean} (resp., {\em Hermitian}) {\em self-orthogonal} if $\C\subseteq \C^{\perp_{\rm E}}$ (resp., $\C \subseteq \C^{\perp_{\rm H}}$). It is called  {\em Euclidean} (resp., {\em Hermitian}) {\em self-dual} if $\C=\C^{\perp_{\rm E}}$ (resp., $\C=\C^{\perp_{\rm H}}$).

In Section $3$ and the remaining parts of this section, we focus on linear and self-dual  linear codes over $\mathbb{F}_{q}+u\mathbb{F}_{q}+u^2\mathbb{F}_{q}$.
In \cite{NH2000}, it has been shown that every linear code of length $n$ over $\mathbb{F}_{q}+u\mathbb{F}_{q}+u^2\mathbb{F}_{q}$ is permutation equivalent to a code $\C$ with generator matrix
\begin{equation}\label{gmlinearcode}
G=\begin{bmatrix}I_k	&A_2  & A_3  & A_4\\ 
0	&  uI_l& uB_3   & uB_4  \\ 
0	& 0 & u^2I_m & u^2C_4 
\end{bmatrix}
=\begin{bmatrix}A^{\prime}\\ 
uB^{\prime}	  \\ 
u^2C
\end{bmatrix},
\end{equation}
where $I_r$ is the identity matrix of order $r$, $A_3 = A_{30} + uA_{31}, B_4 = B_{40} + uB_{41}, A_4 = A_{40} +uA_{41} +u^2A_{42}$, and $A_2, B_3,C_4$, $A_{ij}$ and $B_{ij}$ are matrices of appropriate sizes over~$\mathbb{F}_q$. In this case, the code $\C$ is said to be of type $\{k, l, m\}$ and it contains $q^{3k+2l+m}$ codewords.

For each linear code $\C$ of~length $n$ over $\mathbb{F}_{q}+u\mathbb{F}_{q}+u^2\mathbb{F}_{q}$ and $i\in\{0,1,2\}$,   the {\em ${i}${\rm th} torsion code} of $\C$ is a linear code of length $n$ over $\mathbb{F}_q$ defined to be 
\begin{align*}
\Tor_i(\C)= \displaystyle\left\{ \boldsymbol{v}(\bmod u) \mid \boldsymbol{v}\in \left(\mathbb{F}_{q}+u\mathbb{F}_{q}+u^2\mathbb{F}_{q}\right)^n \text{ and } u^i\boldsymbol{v}\in \C\right\}.
\end{align*}
The code $\Tor_0({\C})$ is sometime called    the {\em residue code} of  $\C$ and denoted it  by $\Res(\C)$. From the definitions, it is obvious that $\Res(\C)=\Tor_0(\C)\subseteq \Tor_1(\C) \subseteq \Tor_2(\C)$.
For a linear code $\C$ of length $n$ over $\mathbb{F}_{q}+u\mathbb{F}_{q}+u^2\mathbb{F}_{q}$ with generator matrix $G$ given  in \eqref{gmlinearcode},  the residue code $\Res(\C)$ has dimension $k$ and generator matrix
􏰋􏰌\begin{equation}\label{gmres}
G=\begin{bmatrix}I_k	&A_2  & A_{30}  & A_{40}
\end{bmatrix},
\end{equation}
the first torsion code $\Tor_1(\C)$ has dimension $k + l$ and generator matrix
􏰍􏰎􏰍 􏰎\begin{equation}\label{gmtor1}
\begin{bmatrix}A\\ 
B
\end{bmatrix}
=\begin{bmatrix}I_k	&A_2  & A_{30}  & A_{40}\\ 
0	&  I_l& B_3   & B_{4  0}
\end{bmatrix},
\end{equation}
and the second torsion code $\Tor_2(\C)$ has dimension $k + l + m$ and generator matrix
\begin{equation}\label{gmtor2}
\begin{bmatrix}A\\ 
B\\ 
C
\end{bmatrix}
=\begin{bmatrix}I_k	&A_2  & A_{30}  & A_{40}\\ 
0	&  I_l& B_3   & B_{40}  \\ 
0	& 0 & I_m & C_4 
\end{bmatrix}.
\end{equation}

For $0\leq k \leq n$, the {\em Gaussian coefficient}
is defined  to be 
\[ \begin{bmatrix}n\\ 
k
\end{bmatrix}_q=\displaystyle\dfrac{(q^n-1)(q^n-q)\cdots (q^n-q^{k-1})}{(q^k-1)(q^k-q)\cdots (q^k-q^{k-1})}.\]

Let $N_e(q,n)$ denote the number of distinct  linear codes of   length $n$ over $\mathbb{F}_{q}+u\mathbb{F}_{q}+\dots+u^{e-1}\mathbb{F}_{q}$.  The number $N_e(q,n)$ has been studied and summarized in \cite{CG2015}. For $e=3$, we have the following result. 
\begin{proposition}[{\cite[Lemma 2.2]{CG2015}}] \label{N3} Let $q$ be a prime power and let $n$ be a positive integer. Then\[N_3(q,n)=1+\sum_{t=1}^{3}~ \sum_{n\geq h_1\geq h_2 \geq \dots\geq h_t >h_{t+1}=0} ~\prod_{j=1}^{t}   \begin{bmatrix}n-h_{j+1}\\ 
    h_j-h_{j+1}
    \end{bmatrix}_q q^{h_{j+1}(n-h_j)}.\]
    
\end{proposition}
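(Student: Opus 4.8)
The plan is to prove the formula by induction on the nilpotency index, after first organizing codes according to how deep they lie in the maximal ideal. Write $R_s:=\mathbb{F}_q[u]/\langle u^s\rangle$, and for a length-$n$ code $\C$ over $R_3$ define its depth $d(\C):=\max\{i:\C\subseteq u^iR_3^n\}$, with $d(\mathbf 0)=3$. Multiplication by $u^d$ is an $R_3$-module isomorphism $R_{3-d}^n\xrightarrow{\ \sim\ }u^dR_3^n$, under which, for $d\le 2$, the codes of depth exactly $d$ correspond bijectively to the length-$n$ codes $\C'$ over $R_{3-d}$ with $\Res(\C')\neq\mathbf 0$. Putting $t:=3-d$, this gives $N_3(q,n)=1+\sum_{t=1}^{3}M_t(q,n)$, where $M_t(q,n)$ is the number of length-$n$ codes over $R_t$ with nonzero residue code and the leading $1$ is the zero code. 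It therefore suffices to prove
\[
M_t(q,n)=\sum_{n\geq h_1\geq h_2\geq\dots\geq h_t>h_{t+1}=0}\ \prod_{j=1}^{t}\begin{bmatrix}n-h_{j+1}\\ h_j-h_{j+1}\end{bmatrix}_q q^{h_{j+1}(n-h_j)},
\]
which is exactly the inner sum of the Proposition; summing over $t$ finishes the proof.

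I would establish this for all $t=e\geq 1$ by induction. For $e=1$, $M_1(q,n)$ counts the nonzero subspaces of $\mathbb{F}_q^n$, namely $\sum_{h_1=1}^{n}\begin{bmatrix}n\\ h_1\end{bmatrix}_q$, as required. For the step, fix $e\geq 2$, take a length-$n$ code $\C$ over $R_e$ with nonzero residue, and reduce modulo $u^{e-1}$: let $\bar\C\subseteq R_{e-1}^n$ be the image (still with nonzero residue, since $\Res$ is unchanged) and $u^{e-1}W:=\C\cap u^{e-1}R_e^n$, so $W=\Tor_{e-1}(\C)\subseteq\mathbb{F}_q^n$. Since the torsion codes over a finite chain ring form an increasing chain, one checks $\Tor_i(\C)=\Tor_i(\bar\C)$ for $0\le i\le e-2$; writing $h_1=\dim W$ and $h_2=\dim\Tor_{e-2}(\bar\C)$, the torsion chain of $\bar\C$ is then $h_2\geq h_3\geq\dots\geq h_e>0$. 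The crux is reconstructing $\C$ from $(\bar\C,W)$: fixing a minimal generating set $g_1,\dots,g_\mu$ of $\bar\C$ (so $\mu=\dim\Tor_{e-2}(\bar\C)=h_2$ is its minimal number of generators) and letting $\tilde g_i\in R_e^n$ be the lift with vanishing $u^{e-1}$-coefficient, one shows $\C$ is uniquely of the form $\C=u^{e-1}W+\sum_{i=1}^{\mu}R_e(\tilde g_i+u^{e-1}x_i)$; that the cosets $x_i\bmod W\in\mathbb{F}_q^n/W$ are determined by $\C$ and may be chosen freely; and that the requirement $\C\cap u^{e-1}R_e^n=u^{e-1}W$ (not a larger subspace) amounts to the single condition $W\supseteq\Tor_{e-2}(\bar\C)$. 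Consequently, for a fixed $\bar\C$, the number of codes $\C$ over it with $\dim\Tor_{e-1}(\C)=h_1$ equals $\begin{bmatrix}n-h_2\\ h_1-h_2\end{bmatrix}_q q^{h_2(n-h_1)}$ — the Gaussian coefficient counting the $h_1$-dimensional $W\supseteq\Tor_{e-2}(\bar\C)$ and the power counting the tuples $(x_i\bmod W)\in(\mathbb{F}_q^n/W)^\mu$. Multiplying by the inductive count of the $\bar\C$'s (grouped by their torsion chain $h_2\ge\dots\ge h_e>0$) and summing over $h_1\geq h_2$ yields the displayed formula for $M_e(q,n)$.

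The main obstacle is this reconstruction lemma, and within it pinning down exactly which $W$ are admissible. One must show that the correction terms $x_i$ are genuinely unconstrained while $W$ is forced to contain precisely $\Tor_{e-2}(\bar\C)$: the point is that the syzygy module of a minimal generating set of $\bar\C$ lies in $uR_{e-1}^\mu$ (Nakayama's lemma over the chain ring $R_{e-1}$), so evaluating a syzygy on the perturbed generators $\tilde g_i+u^{e-1}x_i$ produces an element of $u^{e-1}R_e^n$ whose $\mathbb{F}_q^n$-component is independent of the $x_i$ and, as the syzygy ranges, spans exactly $\Tor_{e-2}(\bar\C)$; this is what simultaneously kills the apparent constraints on the $x_i$ and produces the constraint $W\supseteq\Tor_{e-2}(\bar\C)$. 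The remainder is bookkeeping: well-definedness of $\C\mapsto(x_i\bmod W)$ independent of the chosen generating set, injectivity of the data$\,\to\,$code map, and the degenerate cases such as $W=\mathbb{F}_q^n$ or $\bar\C=R_{e-1}^n$. An alternative that sidesteps the syzygy analysis is to put each code over $R_e$ into a unique "reduced row-echelon" normal form — refining \eqref{gmlinearcode} by allowing the pivot columns to occur anywhere — and count such matrices directly, the pivot-column sets producing the Gaussian coefficients and the free entries producing the $q$-powers; proving uniqueness of that form is about as much work, so the combinatorial core is the same either way.
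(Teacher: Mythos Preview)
The paper does not give its own proof of this statement: Proposition~\ref{N3} is quoted verbatim from \cite[Lemma~2.2]{CG2015} and stated without argument, so there is no in-paper proof to compare your proposal against.

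On its own merits your argument is correct. The depth stratification $\C\mapsto d(\C)$ cleanly reduces $N_3(q,n)$ to $1+\sum_{t=1}^3 M_t(q,n)$, and the inductive fibration $\C\mapsto\bigl(\bar\C,\,W=\Tor_{e-1}(\C),\,(x_i\bmod W)\bigr)$ for computing $M_e$ goes through. The two points you single out as the crux do hold: because the syzygy module of a minimal generating set of $\bar\C$ lies in $uR_{e-1}^{\mu}$ (Nakayama over the local ring $R_{e-1}$), multiplying any lifted syzygy into $u^{e-1}x_i$ yields $0$ in $R_e$, so the $x_i$'s are genuinely unconstrained; and the same observation shows $\Tor_{e-1}\bigl(\langle \tilde g_i+u^{e-1}x_i\rangle\bigr)=\Tor_{e-1}\bigl(\langle\tilde g_i\rangle\bigr)=\Tor_{e-2}(\bar\C)$ regardless of the $x_i$, whence $\Tor_{e-1}(\C)=W+\Tor_{e-2}(\bar\C)$ and the admissibility condition on $W$ is exactly $W\supseteq\Tor_{e-2}(\bar\C)$. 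With $\mu=\dim\Tor_{e-2}(\bar\C)=h_2$, the fibre count $\begin{bmatrix}n-h_2\\h_1-h_2\end{bmatrix}_q q^{h_2(n-h_1)}$ follows, and multiplying by the inductive count of $\bar\C$ (indexed by $h_2\geq\dots\geq h_e>0$) and summing over $h_1\ge h_2$ gives the stated product. The residual bookkeeping---surjectivity onto every $\C$ with prescribed $(\bar\C,W)$, and well-definedness of $x_i\bmod W$---is routine. Your alternative via a unique reduced-row-echelon normal form refining \eqref{gmlinearcode} would also work and is closer in spirit to how such formulas are typically derived in the chain-ring literature; either route yields the same Gaussian-coefficient-times-$q$-power factorization.
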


\subsection{Euclidean Self-Dual Linear Codes over $\mathbb{F}_{q}+u\mathbb{F}_{q}+u^2\mathbb{F}_{q}$}
Let $\C$ be a linear code of length $n$ and type $\{k, l, m\}$ over $\mathbb{F}_{q}+u\mathbb{F}_{q}+u^2\mathbb{F}_{q}$  and let $h = n- (k + l + m)$. In \cite{BNV2018}, it has been shown that the  Euclidean dual  $\C^{\perp_{\rm E}}$ of $\C$  is of type $\{h, m, l\}$ and it contains  $q^{3h+2m+l}$ codewords. Therefore,   $k=h$ and $l=m$ whenever $\C$ is Euclidean self-dual. Consequently, every  Euclidean self-dual code of  type $\{k, l, m\}$  over $\mathbb{F}_{q}+u\mathbb{F}_{q}+u^2\mathbb{F}_{q}$  has  even length $n = 2(k + l)$.

Characterization of Euclidean self-dual linear codes of even length $n$ over $\mathbb{F}_{q}+u\mathbb{F}_{q}+u^2\mathbb{F}_{q}$ has been established in \cite{BNV2018}.

\begin{proposition}[{\cite[Proposition 1]{BNV2018}}]\label{charesd}Let $q$ be a prime power and let  $\C$ be a linear code of  length $n$ and type $\{k, l, m\}$  over $\mathbb{F}_{q}+u\mathbb{F}_{q}+u^2\mathbb{F}_{q}$  with  generator matrix $G$ in the form of  \eqref{gmlinearcode}. Then $\C$ is Euclidean self-dual if and only if $k=h,l=m$ and the following  conditions hold:
    \begin{align}
    A^{\prime}{A^{\prime}}^T&\equiv 0 \pmod {u^3},\\
    A^{\prime}{B^{\prime}}^T&\equiv 0 \pmod {u^2},\\
    B^{\prime}{B^{\prime}}^T&\equiv 0 \pmod u,\\
    A^{\prime}C^T&\equiv 0 \pmod u.
    \end{align}
\end{proposition}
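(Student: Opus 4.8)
\emph{Proof proposal.} Write $R:=\mathbb{F}_{q}+u\mathbb{F}_{q}+u^2\mathbb{F}_{q}$. The plan is to first settle the type constraint and then reduce Euclidean self-duality to a finite list of inner-product relations among the rows of $G$. From the discussion preceding the statement, $\C^{\perp_{\rm E}}$ has type $\{h,m,l\}$; since the numbers $k,l,m$ are determined by $\C$ (it has $q^{3k+2l+m}$ codewords, and $\Res(\C),\Tor_1(\C),\Tor_2(\C)$ have dimensions $k$, $k+l$, $k+l+m$), the equality $\C=\C^{\perp_{\rm E}}$ forces $k=h$ and $l=m$. Conversely, once $k=h$ and $l=m$ we have $|\C|=q^{3k+2l+m}=q^{3k+3l}=q^{3h+2m+l}=|\C^{\perp_{\rm E}}|$, so any inclusion $\C\subseteq\C^{\perp_{\rm E}}$ is automatically an equality by finiteness. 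It therefore suffices to characterise when $\C$ is Euclidean self-orthogonal.

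Since the Euclidean inner product on $R^n$ is $R$-bilinear and $R$ is commutative, $\C\subseteq\C^{\perp_{\rm E}}$ holds if and only if $\langle\boldsymbol{x},\boldsymbol{y}\rangle_{\rm E}=0$ for every pair of rows $\boldsymbol{x},\boldsymbol{y}$ of the generator matrix $G$. Reading off $G$ in the block form $[\,{A'}^{T}\mid u{B'}^{T}\mid u^2C^{T}\,]^{T}$, the rows come in three batches --- the rows of $A'$, the rows of $uB'$, and the rows of $u^2C$ --- so, using symmetry of the inner product, the pairwise relations are exactly the entrywise vanishing in $R$ of the six products $A'{A'}^{T}$, $u\,A'{B'}^{T}$, $u^2A'C^{T}$, $u^2B'{B'}^{T}$, $u^3B'C^{T}$, and $u^4CC^{T}$. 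The last two vanish automatically because $u^3=0$ in $R$.

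The remaining four are then translated using the chain-ring structure of $R$: for $x\in R$ one has $ux=0$ iff $x\equiv 0\pmod{u^2}$, $u^2x=0$ iff $x\equiv 0\pmod{u}$, and $x=0$ iff $x\equiv 0\pmod{u^3}$. Applying this entrywise gives $A'{A'}^{T}\equiv 0\pmod{u^3}$ from the $A'$--$A'$ batch, $A'{B'}^{T}\equiv 0\pmod{u^2}$ from the $A'$--$uB'$ batch, $B'{B'}^{T}\equiv 0\pmod{u}$ from the $uB'$--$uB'$ batch, and $A'C^{T}\equiv 0\pmod{u}$ from the $A'$--$u^2C$ batch --- precisely the four stated congruences. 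Together with the first paragraph, this yields the claimed equivalence.

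The step I expect to require the most care --- though it is bookkeeping rather than a real obstacle --- is handling the mixed $u$-adic content already carried by the blocks $A_3=A_{30}+uA_{31}$, $B_4=B_{40}+uB_{41}$, $A_4=A_{40}+uA_{41}+u^2A_{42}$: a relation such as $A'{A'}^{T}\equiv 0\pmod{u^3}$ unpacks into three separate $\mathbb{F}_q$-matrix identities on the coefficient layers of $u^0,u^1,u^2$, and one must keep these aligned with the constraints implicit in $G$ being a standard-form generator matrix as in \eqref{gmlinearcode}. It is also worth checking along the way that reducing the $i$th row batch modulo $u$ reproduces the torsion-code generator matrices \eqref{gmres}--\eqref{gmtor2}, since that is what underlies the type computation for $\C^{\perp_{\rm E}}$ invoked at the outset; no genuine estimate is involved, the content being entirely the passage between $R$-linear relations and $\mathbb{F}_q$-linear relations on the coefficient layers.
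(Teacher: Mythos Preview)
Your proof is correct and follows essentially the same approach as the paper. Note that the paper does not itself prove this Euclidean statement---it is quoted from \cite{BNV2018} in the preliminaries without proof---but your argument mirrors exactly the paper's own proof of the Hermitian analogue (Proposition~\ref{charhsd}): derive the type constraint $k=h$, $l=m$ from the dual-type computation, observe that self-orthogonality is equivalent to $GG^{T}\equiv 0\pmod{u^3}$ and hence to the four block congruences, and close the converse by the cardinality count $|\C|=|\C^{\perp_{\rm E}}|$. Your version is simply more explicit about expanding the six row-batch products and invoking the chain-ring annihilator identities $u^{i}x=0\Leftrightarrow x\equiv 0\pmod{u^{3-i}}$.
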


For a positive integer $n$ and a prime power $q$, let $\sigma_{\rm E}(q,n)$  denote the number of  Euclidean self-dual  linear codes of length $n$ over $\mathbb{F}_q$. Further, if $q$ is a square prime power,  let $\sigma_{\rm H}(q,n)$  denote the number of  Hermitian  self-dual  linear codes of length $n$ over $\mathbb{F}_q$.  The following results in {\cite{PV1968} and \cite{RS1998}} are useful in the enumeration of self-dual linear codes over  $\mathbb{F}_{q}+u\mathbb{F}_{q}+u^2\mathbb{F}_{q}$.

\begin{lemma}[{\cite{PV1968} and \cite{RS1998}}] Let $q$ be a  prime power and let  $n$  be a positive integer. Then
    
    \begin{align}
    \sigma_{\rm E}(q,l)&=
    \begin{cases} 
    \displaystyle\prod_{i=1}^{\frac{n}{2}-1}(q^{i}+1) & \text{if $q$ and $n$ are even},\\
    \displaystyle2\prod_{i=1}^{\frac{n}{2}-1}(q^{i}+1) & \text{if $q\equiv 1\,({\rm mod}\, 4)$ and $2\mid n$,}\\ 
    \displaystyle2\prod_{i=1}^{\frac{n}{2}-1}(q^{i}+1) & \text{if $q\equiv 3\,({\rm mod}\, 4)$  and $4\mid n$,}\\
    0		& \text{otherwise}.						
    \end{cases}\label{Edual}
    \end{align}
    If $q$ is square, then
    \begin{align}
    \sigma_{\rm H}(q,n)&=\begin{cases}\displaystyle\prod_{i=0}^{\frac{n}{2}-1}(q^{i+\frac{1}{2}}+1) & \text{if $n$ is even},\\	
    0& \text{otherwise}.\label{Hdual}
    \end{cases}
    \end{align}
    The empty product is regarded as $1$.
\end{lemma}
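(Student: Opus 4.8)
The plan is to recast self-dual codes geometrically. Fix $n$ and $q$ and view $\mathbb{F}_q^n$ as a non-degenerate form space: the symmetric bilinear form $\langle\cdot,\cdot\rangle_{\rm E}$ with matrix $I_n$ in the Euclidean setting, and the Hermitian form $\langle\cdot,\cdot\rangle_{\rm H}$ over the fixed field $\mathbb{F}_{\sqrt q}$ in the Hermitian setting. A self-dual code is exactly a \emph{Lagrangian}, i.e.\ a totally isotropic subspace of dimension $n/2$; hence $\sigma_{\rm E}(q,n)$ and $\sigma_{\rm H}(q,n)$ are the numbers of Lagrangians of the respective spaces, and both vanish unless $n$ is even and the form has Witt index $n/2$. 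I would then reduce everything to two classical facts about finite classical geometries: the number of isotropic points, and a single-step recursion in the length.

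The recursion comes from a double count of the pairs $(\C,\ell)$ with $\C$ a Lagrangian and $\ell\subseteq\C$ a one-dimensional $\mathbb{F}_q$-subspace. Summing over $\C$ gives $N_n\cdot\frac{q^{n/2}-1}{q-1}$, since a Lagrangian has $\frac{q^{n/2}-1}{q-1}$ lines and every one of them is isotropic; summing over $\ell$ gives $L_n\cdot N_{n-2}$, where $L_n$ is the number of isotropic lines of the ambient space and $N_{n-2}$ counts the Lagrangians of the hyperbolic reduction $\boldsymbol v^{\perp}/\langle\boldsymbol v\rangle$ at a fixed isotropic vector $\boldsymbol v$, a non-degenerate form space of dimension $n-2$ of the same isometry type as the ambient one. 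Substituting the standard values of $L_n$ for an orthogonal space of $+$-type and for a Hermitian space, the two expressions collapse to
\[
N^{+}_{2k}=(q^{k-1}+1)\,N^{+}_{2k-2}\qquad\text{and}\qquad N_{2k}=(q^{k-1/2}+1)\,N_{2k-2},
\]
respectively. With the base values $N^{+}_{0}=N_{0}=1$ these unroll to $N^{+}_{n}=2\prod_{i=1}^{n/2-1}(q^{i}+1)$ and $N_{n}=\prod_{i=0}^{n/2-1}(q^{i+1/2}+1)$.

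It then remains to decide, in each setting, whether the ambient form actually has Witt index $n/2$. The Hermitian form over $\mathbb{F}_{\sqrt q}$ always does when $n$ is even, so the Hermitian count holds unconditionally and is $0$ for odd $n$, giving \eqref{Hdual}. For the Euclidean form with $q$ odd, I would compute the discriminant of $I_n$, which is $1$, and invoke the classification of even-dimensional orthogonal spaces over a finite field: such a form is of $+$-type (equivalently, admits a Lagrangian) precisely when $(-1)^{n/2}$ is a square in $\mathbb{F}_q$. This holds when $4\mid n$ (any odd $q$) and when $n\equiv 2\pmod 4$ with $q\equiv 1\pmod 4$, which is exactly the case split in \eqref{Edual}, yielding $2\prod_{i=1}^{n/2-1}(q^{i}+1)$ there and $0$ otherwise. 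For $q$ even the form $\langle\cdot,\cdot\rangle_{\rm E}$ is symmetric but not alternating; self-duality forces $\C$ into the even-weight hyperplane $H=\{\boldsymbol v:\sum_i v_i=0\}$, whose radical is $\langle\mathbf 1\rangle$ when $n$ is even, and the assignment $\C\mapsto\C/\langle\mathbf 1\rangle$ identifies $\sigma_{\rm E}(q,n)$ with the number of Lagrangians of the non-degenerate symplectic space $H/\langle\mathbf 1\rangle$ of dimension $n-2$, namely $\prod_{i=1}^{n/2-1}(q^{i}+1)$; for odd $n$ no self-dual code exists and the count is $0$.

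The step I expect to be the main obstacle is the bookkeeping in the $q$-odd Euclidean recursion: one must run it in the category of $+$-type orthogonal spaces and their hyperbolic reductions rather than among the ``standard'' spaces $I_m$, because removing a hyperbolic plane multiplies the discriminant by a non-square when $q\equiv 3\pmod 4$, so the existence pattern for length $n-2$ genuinely differs from that of the reduced space that appears in the count. Keeping these two straight, together with the separate char-$2$ treatment via the symplectic quotient $H/\langle\mathbf 1\rangle$, is what produces the modular conditions in \eqref{Edual}.
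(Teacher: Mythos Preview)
The paper does not supply a proof of this lemma: it is quoted verbatim from Pless (1968) and Rains--Sloane (1998), so there is no ``paper's own proof'' to compare against. Your task therefore reduces to giving a self-contained argument, and the geometric strategy you outline is sound and is essentially one of the classical routes to these formulas.

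A few remarks on the execution. Your double-count $(\C,\ell)$ and the hyperbolic reduction $\boldsymbol v^{\perp}/\langle\boldsymbol v\rangle$ are exactly Witt's extension theorem in disguise, and they do yield the stated one-step recursions once you plug in the known counts of isotropic points for $O^{+}_{2k}(q)$, $Sp_{2m}(q)$, and $U_{n}(\sqrt{q})$. The delicate point you flag---that in the odd-$q$ orthogonal case the recursion lives in the isometry class of $+$-type spaces rather than in the family of standard forms $I_m$---is precisely the right thing to worry about, and your handling of it (compute the discriminant of $I_n$, invoke the $+$/$-$ dichotomy via $(-1)^{n/2}\in(\mathbb F_q^\times)^2$, then run the recursion purely inside the $+$-type tower) is correct. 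The characteristic-$2$ reduction to the symplectic quotient $H/\langle\mathbf 1\rangle$ is also correct; you should record explicitly that $\mathbf 1\in\C$ for every self-dual $\C$ (from $\C\subseteq H$ and $H^{\perp}=\langle\mathbf 1\rangle=\C^{\perp}\cap H^{\perp}\subseteq\C$), so the passage to the quotient is a genuine bijection. With those details filled in, your sketch becomes a complete proof of both \eqref{Edual} and \eqref{Hdual}.
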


Let $NE_e(q,n)$ denote the number of distinct Euclidean self-dual linear codes of   length $n$ over $\mathbb{F}_{q}+u\mathbb{F}_{q}+\dots+u^{e-1}\mathbb{F}_{q}$. The  value of $NE_3(q,n)$  has been  completely determined  in  \cite{BNV2018}.
\begin{theorem}[{\cite[Theorem 1]{BNV2018}}] Let $q$ be a prime power and let $n$ be a positive integer. Then
    \[NE_3(q,n)=
    \begin{cases}
    \sigma_{\rm E}\displaystyle(q,n)\displaystyle\sum_{k=0}^{n/2}\begin{bmatrix}\displaystyle\dfrac{n}{2}\\ 
    k
    \end{bmatrix}_qq^{kn/2} &\text{ if } q \text{ is even and } n \text{ is even},\\
    \sigma_{\rm E}\displaystyle(q,n)\displaystyle\sum_{k=0}^{n/2}\begin{bmatrix}\displaystyle\dfrac{n}{2}\\ 
    k
    \end{bmatrix}_qq^{k(n/2-1)} &\text{ if } q \text{ is odd and } n \text{ is even},\\
    0& \text{ otherwise}.
    \end{cases}\]
\end{theorem}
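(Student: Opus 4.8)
The plan is to stratify the Euclidean self-dual codes by their type $\{k,l,m\}$ and, within each stratum, to organise the enumeration around the three torsion codes $\Res(\C)\subseteq\Tor_1(\C)\subseteq\Tor_2(\C)$. By the discussion preceding Proposition~\ref{charesd}, a Euclidean self-dual code $\C$ over $\mathbb{F}_{q}+u\mathbb{F}_{q}+u^2\mathbb{F}_{q}$ has type $\{k,l,l\}$ and even length $n=2(k+l)$; this already forces $NE_3(q,n)=0$ for odd $n$. So fix $n$ even, put $l=n/2-k$, and let $k$ run over $0,1,\dots,n/2$. The first step is the torsion duality $\Tor_i(\C)^{\perp_{\rm E}}=\Tor_{2-i}(\C^{\perp_{\rm E}})$, valid for every linear code over the chain ring $\mathbb{F}_q[u]/\langle u^3\rangle$: from $u^{2-i}\boldsymbol{w}\in\C^{\perp_{\rm E}}$ and $u^{i}\boldsymbol{v}\in\C$ one gets $u^{2}\langle\boldsymbol{v},\boldsymbol{w}\rangle_{\rm E}=0$, so $\langle\boldsymbol{v},\boldsymbol{w}\rangle_{\rm E}$ lies in the ideal $\langle u\rangle$; this gives the inclusion $\Tor_{2-i}(\C^{\perp_{\rm E}})\subseteq\Tor_{i}(\C)^{\perp_{\rm E}}$, and a dimension count gives equality. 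Applied to a self-dual $\C$, this shows that $\Tor_1(\C)$ is a Euclidean self-dual $\mathbb{F}_q$-code of length~$n$ and that $\Tor_2(\C)=\Res(\C)^{\perp_{\rm E}}$; since $\Res(\C)\subseteq\Tor_1(\C)=\Tor_1(\C)^{\perp_{\rm E}}$, the residue code $\Res(\C)$ is a $k$-dimensional (automatically Euclidean self-orthogonal) subspace of the self-dual code $\Tor_1(\C)$, and $\Tor_2(\C)$ is then determined by $\Res(\C)$.

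It follows that $\C\mapsto(\Res(\C),\Tor_1(\C))$ maps the Euclidean self-dual codes of type $\{k,l,l\}$ onto the pairs $(E,D)$ with $D$ a Euclidean self-dual $\mathbb{F}_q$-code of length~$n$ and $E\subseteq D$ of dimension~$k$: there are $\sigma_{\rm E}(q,n)$ choices of $D$ and $\begin{bmatrix}n/2\\k\end{bmatrix}_q$ choices of $E$ once $D$ is fixed. I would then argue that the number $f(q,n,k)$ of self-dual codes $\C$ lying over a prescribed pair $(E,D)$ depends only on $q,n,k$: the orthogonal group of the standard bilinear form acts, through $\C\mapsto\C P$, on these pairs and on their fibres, and a Witt-type transitivity statement puts all pairs with $\dim E=k$ into a single orbit (all self-dual $\mathbb{F}_q$-codes — i.e.\ all maximal totally isotropic subspaces — form one orbit, and the stabiliser of one acts as its full general linear group on it). Granting this, the number of self-dual codes of type $\{k,l,l\}$ equals $\sigma_{\rm E}(q,n)\begin{bmatrix}n/2\\k\end{bmatrix}_q f(q,n,k)$, and summing over $k$ reduces the theorem to the identity
\[
f(q,n,k)=\begin{cases} q^{kn/2} & \text{if $q$ is even},\\[1mm] q^{k(n/2-1)} & \text{if $q$ is odd},\end{cases}
\]
together with the fact that $\sigma_{\rm E}(q,n)=0$ for odd $n$ (which also disposes of the even-$n$ cases in which no self-dual $\mathbb{F}_q$-code exists).

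To compute $f(q,n,k)$ I would fix bases of the flag $E\subseteq D\subseteq\Res(\C)^{\perp_{\rm E}}=\Tor_2(\C)$ and a coordinate permutation bringing the generators of $E$, $D$, $\Tor_2(\C)$ into the reduced forms \eqref{gmres}, \eqref{gmtor1}, \eqref{gmtor2}; a self-dual code with these torsion codes then has a generator matrix as in \eqref{gmlinearcode} in which $A_2,A_{30},A_{40},B_3,B_{40},C_4$ are pinned down by $(E,D)$, so the free data are precisely the $\mathbb{F}_q$-matrices $A_{31},A_{41},A_{42},B_{41}$, of total $\mathbb{F}_q$-dimension $kl+k^2+k^2+lk=kn$. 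Substituting \eqref{gmlinearcode} into Proposition~\ref{charesd}, the $u^0$-components of the four congruences hold automatically (they express precisely that $E$ is self-orthogonal, $D$ self-dual, and $\Tor_2(\C)=E^{\perp_{\rm E}}$), and $B'{B'}^T\equiv 0\pmod u$, $A'C^T\equiv 0\pmod u$ contribute nothing further; what remains are the $u^1$- and $u^2$-components of $A'{A'}^T\equiv 0\pmod{u^3}$ and the $u^1$-component of $A'{B'}^T\equiv 0\pmod{u^2}$, which are inhomogeneous $\mathbb{F}_q$-linear systems in $A_{31},A_{41},A_{42},B_{41}$. The $u^1$-component of $A'{B'}^T$ gives $kl$ conditions; each of the two symmetric $k\times k$ equations from $A'{A'}^T$ gives $\binom{k+1}{2}$ conditions when $q$ is odd, but when $q$ is even its diagonal entries are perfect squares (sums of squares), hence have vanishing $u^1$-part and vanish at level $u^2$ under a single $\mathbb{F}_q$-linear condition per row, so these two equations contribute only $2\binom{k}{2}+k=k^2$ conditions instead of $2\binom{k+1}{2}=k^2+k$. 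Altogether there are $kn/2$ conditions for $q$ even and $kn/2+k$ for $q$ odd, so the solution spaces have $\mathbb{F}_q$-dimension $kn/2$ and $k(n/2-1)$ respectively. The main obstacle is to show that these systems are consistent and that the listed conditions are genuinely independent; I would settle both by using the Witt reduction to normalise $D$ and $E$ to an explicit standard self-dual code and standard subspace, making $A_{30},A_{40},B_{40},C_4$ concrete enough to exhibit one solution and to run the rank count directly — the characteristic-two collapse of the diagonal equations being the only real subtlety. Feeding $f(q,n,k)$ back into $\sum_{k=0}^{n/2}\sigma_{\rm E}(q,n)\begin{bmatrix}n/2\\k\end{bmatrix}_q f(q,n,k)$ then yields the stated formula.
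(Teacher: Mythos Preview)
The paper does not prove this theorem; it is quoted from \cite{BNV2018}. However, the paper's proof of the Hermitian analogue (Proposition~\ref{HSDC1} leading to Theorem~\ref{NH3}) is an adaptation of the \cite{BNV2018} argument, so that is the relevant comparison.

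Your overall stratification by the pair $(\Res(\C),\Tor_1(\C))$ matches the paper exactly, including the torsion-duality step. The execution of the fibre count, however, differs. You invoke Witt transitivity to argue that $f(q,n,k)$ is constant across pairs $(E,D)$, then count free parameters minus constraints, leaving independence and consistency of the linear system as an acknowledged gap to be settled by normalising $(E,D)$. The paper instead proves a concrete lemma that makes Witt unnecessary: for \emph{any} self-dual $\C_1$ and \emph{any} $k$-dimensional subspace $\C_0\subseteq\C_1$, once generators are put in the echelon form \eqref{gmtor1}--\eqref{gmtor2}, the $k\times k$ block $A_{40}$ is invertible. This is shown by exhibiting an explicit inverse for $H=\begin{bmatrix}A_{30}&A_{40}\\ B_3&B_{40}\end{bmatrix}$ from the self-duality relations, and then observing that $A_{30}=-A_{40}C_4^T$ forces $A_{40}$ itself to be invertible. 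With $A_{40}^{-1}$ in hand, $C_4$ and $B_{41}$ are solved for uniquely, and the remaining system in $(A_{31},A_{41},A_{42})$ is parametrised explicitly: $A_{31}$ is free, and for each choice one solves $X+X^T=-g$ (resp.\ $Y+Y^T=-h$) for $X=A_{40}A_{41}^T$ (resp.\ $Y=A_{40}A_{42}^T$), which immediately gives the counts $q^{\binom{k}{2}}$ or $q^{\binom{k+1}{2}}$ per step according to the parity of $q$, with the diagonal obstruction in characteristic~$2$ checked directly.

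So your approach is sound, but the gap you flag is exactly what the paper's invertibility-of-$A_{40}$ lemma is designed to close: it establishes consistency and the correct rank simultaneously, uniformly over all pairs $(E,D)$, without any appeal to the orthogonal group. Your Witt route would also work (and in characteristic~$2$ one should be a little careful that the standard dot product is a non-alternating symmetric form, so that Witt extension is being applied to the bilinear, not quadratic, setting), but the paper's argument is more direct and constructive.
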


\section{Hermitian Self-Dual Linear Codes over $\mathbb{F}_{q}+u\mathbb{F}_{q}+u^2\mathbb{F}_{q}$}
\label{sec:3}
In this section, we focus on characterization and enumeration of Hermitian  self-dual linear codes  of length $n$  over~${\mathbb{F}_{q}+u\mathbb{F}_{q}+u^2\mathbb{F}_{q}}$.

Throughout this section, we assume that $q$ is a square prime power.  For  each positive integer $n$, let 
$NH_e(q,n)$ denote the number of distinct Hermitian self-dual linear codes of  length $n$ over $\mathbb{F}_{q}+u\mathbb{F}_{q}+\dots +u^{e-1}\mathbb{F}_{q}$.   By extending  techniques introduced  in  \cite{BNV2018}, the characterization and the number  $NH_3(q,n)$ of Hermitian  self-dual linear codes   of length $n$ over~${\mathbb{F}_{q}+u\mathbb{F}_{q}+u^2\mathbb{F}_{q}}$ are established.

Let $\C$ be a linear code of length~$n$ over~${\mathbb{F}_{q}+u\mathbb{F}_{q}+u^2\mathbb{F}_{q}}$  of  type $\{k, l, m\}$ and let   $h = n- (k + l + m)$. Using argument similar to those in Section $2$ of \cite{BNV2018}, it can be deduced that  the Hermitian dual  $\C^{\perp_{\rm H}}$ of $\C$  is of type $\{h, m, l\}$ and it contains $q^{3h+2m+l}$ codewords. It follows that   $k=h$ and $l=m$  if $\C$ is   Hermitian self-dual.   Hence,  every  Hermitian self-dual code of  type $\{k, l, m\}$  over~${\mathbb{F}_{q}+u\mathbb{F}_{q}+u^2\mathbb{F}_{q}}$  has  even length $n = 2(k + l)$.

For a matrix $A=[a_{ij}]_{s\times t}$ over $\mathbb{F}_{q}+u\mathbb{F}_{q}+u^2\mathbb{F}_{q}$,  let $\overline{A}:=\left[\overline{a_{ij}}\right]_{s\times t}$ and  $A^{\dagger}:= \overline{A}^T$. Characterization of Hermitian self-dual linear codes of even length $n$ over $\mathbb{F}_{q}+u\mathbb{F}_{q}+u^2\mathbb{F}_{q}$ is given in the following theorem. 

\begin{proposition}\label{charhsd} Let $q$ be a square prime power and let  $\C$ be a linear code of  even length $n$ and type $\{k, l, m\}$  over $\mathbb{F}_{q}+u\mathbb{F}_{q}+u^2\mathbb{F}_{q}$ with  generator matrix $G$ in the form of  \eqref{gmlinearcode}. Then $\C$ is Hermitian self-dual if and only if $k=h,$ $l=m$ and the following hold:
    \begin{align}
    A^{\prime}{A^{\prime}}^{\dagger}&\equiv 0 \pmod {u^3}\label{char1},\\
    A^{\prime}{B^{\prime}}^{\dagger}&\equiv 0 \pmod {u^2}\label{char2},\\
    B^{\prime}{B^{\prime}}^{\dagger}&\equiv 0 \pmod u\label{char3},\\
    A^{\prime}C^{\dagger}&\equiv 0 \pmod u.\label{char4}
    \end{align}
\end{proposition}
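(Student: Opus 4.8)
The plan is to mirror the proof of Proposition~\ref{charesd} from \cite{BNV2018}, replacing the Euclidean inner product and transpose by the Hermitian inner product and the conjugate-transpose $\dagger$, and checking that conjugation does not disturb any of the structural facts used there. First I would observe that $\C$ is Hermitian self-dual if and only if $\C\subseteq\C^{\perp_{\rm H}}$ and $|\C|=|\C^{\perp_{\rm H}}|$. The cardinality condition, as already noted in the paragraph preceding the statement, forces $k=h$ and $l=m$; so from now on I may assume this and it remains to translate $\C\subseteq\C^{\perp_{\rm H}}$ into conditions on the blocks of $G$. Writing $G=\begin{bmatrix}A'\\ uB'\\ u^2C\end{bmatrix}$ as in \eqref{gmlinearcode}, the containment $\C\subseteq\C^{\perp_{\rm H}}$ is equivalent to $G\,G^{\dagger}\equiv 0\pmod{u^3}$, i.e. to the vanishing modulo $u^3$ of all pairwise Hermitian inner products of the rows of $G$. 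Expanding $GG^{\dagger}$ blockwise gives exactly the six products $A'{A'}^{\dagger}$, $A'(uB')^{\dagger}=u\,A'{B'}^{\dagger}$, $A'(u^2C)^{\dagger}=u^2 A'C^{\dagger}$, $(uB')(uB')^{\dagger}=u^2 B'{B'}^{\dagger}$, $(uB')(u^2C)^{\dagger}=u^3(\dots)$, and $(u^2C)(u^2C)^{\dagger}=u^4(\dots)$. The last two are automatically $0$ modulo $u^3$ since $u^3=0$ in the ring, and the first four, after cancelling the powers of $u$ in front, are precisely \eqref{char1}--\eqref{char4}.

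Next I would justify that the self-orthogonality conditions on the full generator matrix reduce to conditions only on the displayed blocks, i.e. that one need not separately impose relations coming from the lifted parts $A_{31},B_{41},A_{41},A_{42}$ of $A_3,B_4,A_4$. This is where I expect the only genuine subtlety to lie, and it is handled exactly as in \cite{BNV2018}: because $u^3=0$, in the product $A'{A'}^{\dagger}\pmod{u^3}$ only the parts of $A'$ modulo $u^2$ matter for the coefficient of $u^2$, only the parts modulo $u$ matter for the coefficient of $u$, and so on; similarly $u\,A'{B'}^{\dagger}\pmod{u^3}$ only depends on $A'{B'}^{\dagger}\pmod{u^2}$, hence on the reductions of $A'$ and $B'$ modulo $u^2$, and $u^2 A'C^{\dagger}\pmod{u^3}$ depends only on $A'C^{\dagger}\pmod u$, i.e. on the residue codes. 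So the four congruences \eqref{char1}--\eqref{char4}, read at the appropriate $u$-adic precision, capture all the information in $GG^{\dagger}\equiv 0\pmod{u^3}$, and no extra conditions arise. The fact that the rows of $G$ generate $\C$ as a module (not just linearly over the residue field) is what lets us pass from "$GG^{\dagger}\equiv 0$" to "$\langle\boldsymbol{c},\boldsymbol{c}'\rangle_{\rm H}=0$ for all $\boldsymbol{c},\boldsymbol{c}'\in\C$"; here one uses that the ring automorphism $\bar{~}$ is additive and fixes the powers of $u$, so $\overline{u^i a}=u^i\bar a$, which is precisely what makes the blockwise computation above go through with $\dagger$ in place of $T$.

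Finally I would assemble the two directions. For the forward direction, Hermitian self-duality gives $\C\subseteq\C^{\perp_{\rm H}}$, hence $GG^{\dagger}\equiv 0\pmod{u^3}$, hence \eqref{char1}--\eqref{char4} by the blockwise expansion; the type condition $k=h,\ l=m$ has already been recorded. For the converse, assuming $k=h$, $l=m$ and \eqref{char1}--\eqref{char4}, I reverse the blockwise computation to get $GG^{\dagger}\equiv 0\pmod{u^3}$, which says every pair of generators of $\C$ is Hermitian-orthogonal, hence $\C\subseteq\C^{\perp_{\rm H}}$; combined with $|\C|=q^{3k+2l+m}$ and the already-established fact that $\C^{\perp_{\rm H}}$ has type $\{h,m,l\}=\{k,m,l\}$ and size $q^{3k+2m+l}=q^{3k+2l+m}$, we conclude $\C=\C^{\perp_{\rm H}}$. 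The main obstacle, as noted, is the precision bookkeeping in the second paragraph — verifying that the lifted entries do not contribute — but this is routine once one tracks powers of $u$ carefully, and is identical in structure to the Euclidean case in \cite{BNV2018} with $\bar{~}$ applied entrywise throughout.
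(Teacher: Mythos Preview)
Your proof is correct and follows essentially the same route as the paper: both directions hinge on the equivalence between $\C\subseteq\C^{\perp_{\rm H}}$ and $GG^{\dagger}\equiv 0\pmod{u^3}$, the blockwise expansion of the latter into \eqref{char1}--\eqref{char4}, and the cardinality match from $k=h$, $l=m$. Your second paragraph is superfluous, however: the conditions \eqref{char1}--\eqref{char4} are stated on the full ring-valued blocks $A',B',C$ (which already contain the lifts $A_{31},A_{41},A_{42},B_{41}$), so there is no separate ``precision bookkeeping'' to do at this stage --- that only enters later, in the proof of Proposition~\ref{HSDC1}, when the congruences are unpacked into equations over~$\mathbb{F}_q$.
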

\begin{proof}Assume that $\C$ is  Hermitian self-dual.  By the above discussion, we have $k=h,$ $l=m$ and 
    \[\begin{bmatrix}A^{\prime}\\ 
    uB^{\prime}\\ 
    u^2C
    \end{bmatrix}\begin{bmatrix}A^{\prime}\\ 
    uB^{\prime}\\ 
    u^2C
    \end{bmatrix}^{\dagger}\equiv[0]\pmod {u^3}\]
    which  are  equivalent to the  conditions (\ref{char1})--(\ref{char4}).
    
    Conversely, assume that $\C$ is a linear code such that $k=h,l=m$ and the conditions (\ref{char1})--(\ref{char4}) hold. From   (\ref{char1})--(\ref{char4}), it is not difficult to see  that $\C$ is Hermitian self-orthogonal. Equivalently,   $\C\subseteq \C^{\perp_{\rm H}}$. Since $k=h$ and $l=m$, we have $|\C| =| \C^{\perp_{\rm H}}| $  which implies  that  $\C =\C^{\perp_{\rm H}}$. Therefore, $\C$ is Hermitian self-dual as desired.
\end{proof}

\begin{corollary} \label{charHSD} Let $\C$ be a  Hermitian self-dual linear code of length $n$ over $\mathbb{F}_{q}+u\mathbb{F}_{q}+u^2\mathbb{F}_{q}$. Then   the following statements holds. 
    \begin{enumerate}[~~~1)]
        %	\item $\Res(C)$ is Hermitian self-orthogonal.
        \item $\Tor_1(\C)$ is a Hermitian self-dual code of length $n$ over $\mathbb{F}_q$.
        \item $\Tor_2(\C) = \Res(\C)^{\perp_{\rm H}}$.
    \end{enumerate}
\end{corollary}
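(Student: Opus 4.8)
The plan is to use the characterization of Hermitian self-dual codes from Proposition~\ref{charhsd} together with the explicit generator matrices \eqref{gmres}, \eqref{gmtor1}, \eqref{gmtor2} for the torsion codes. Since $\C$ is Hermitian self-dual we have $k=h$ and $l=m$, so that $n=2(k+l)$; in particular $\dim \Tor_1(\C)=k+l=n/2$ and $\dim\Res(\C)=k$, $\dim\Tor_2(\C)=k+l+m=k+2l=n-k$.

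For part~1), I would first show that $\Tor_1(\C)$ is Hermitian self-orthogonal over $\mathbb{F}_q$. Writing $\Tor_1(\C)$ with generator matrix $\left[\begin{smallmatrix}A\\ B\end{smallmatrix}\right]$ as in \eqref{gmtor1}, where $A$ is the mod-$u$ reduction of $A'$ and $B$ is the mod-$u$ reduction of $B'$, one computes the three blocks $AA^{\dagger}$, $AB^{\dagger}$, $BB^{\dagger}$ over $\mathbb{F}_q$. Reducing \eqref{char1} modulo $u$ gives $A'A'^{\dagger}\equiv 0\pmod u$, hence $AA^{\dagger}=0$ over $\mathbb{F}_q$; reducing \eqref{char2} modulo $u$ gives $A'B'^{\dagger}\equiv 0\pmod u$, hence $AB^{\dagger}=0$; and \eqref{char3} is exactly $B'B'^{\dagger}\equiv 0\pmod u$, i.e.\ $BB^{\dagger}=0$. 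Therefore $\left[\begin{smallmatrix}A\\ B\end{smallmatrix}\right]\left[\begin{smallmatrix}A\\ B\end{smallmatrix}\right]^{\dagger}=0$ over $\mathbb{F}_q$, so $\Tor_1(\C)\subseteq\Tor_1(\C)^{\perp_{\rm H}}$. Since $\dim\Tor_1(\C)=n/2$, this forces equality, so $\Tor_1(\C)$ is Hermitian self-dual.

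For part~2), the inclusion $\Res(\C)\subseteq\Tor_2(\C)$ is immediate from the definitions, so it suffices to show $\Tor_2(\C)\subseteq\Res(\C)^{\perp_{\rm H}}$ and then compare dimensions. Using generator matrix $\left[\begin{smallmatrix}A\\ B\\ C\end{smallmatrix}\right]$ for $\Tor_2(\C)$ from \eqref{gmtor2} and generator matrix $A$ for $\Res(\C)$ from \eqref{gmres}, I need $AA^{\dagger}=0$, $AB^{\dagger}=0$, and $AC^{\dagger}=0$ over $\mathbb{F}_q$. The first two were already obtained above; the third comes from reducing \eqref{char4}, $A'C^{\dagger}\equiv 0\pmod u$, modulo $u$. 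This gives $\Tor_2(\C)\subseteq\Res(\C)^{\perp_{\rm H}}$. Finally, $\dim\Res(\C)^{\perp_{\rm H}}=n-k=\dim\Tor_2(\C)$, so the inclusion is an equality.

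The only mild subtlety — the potential "main obstacle" — is keeping careful track of which of the matrices $A',B',C$ carry higher-order $u$-terms (recall $A_3=A_{30}+uA_{31}$, etc.) and confirming that their mod-$u$ reductions are precisely the $\mathbb{F}_q$-matrices $A$, $B$, $C$ appearing in \eqref{gmres}--\eqref{gmtor2}; once that bookkeeping is in place, every step is a one-line reduction of the congruences \eqref{char1}--\eqref{char4} modulo $u$ followed by a dimension count, so no genuinely hard computation remains.
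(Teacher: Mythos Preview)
Your proposal is correct and follows essentially the same approach as the paper's proof: reduce conditions \eqref{char1}--\eqref{char4} modulo $u$ to obtain Hermitian self-orthogonality of $\Tor_1(\C)$ and the inclusion $\Tor_2(\C)\subseteq\Res(\C)^{\perp_{\rm H}}$, then conclude equality via the dimension counts $k+l=n/2$ and $k+2l=n-k$. The paper's argument is terser (it simply says ``from \eqref{char1}--\eqref{char3}'' and ``from \eqref{char1}--\eqref{char4}''), whereas you spell out explicitly that the mod-$u$ reductions of $A'$, $B'$ are the matrices $A$, $B$ of \eqref{gmtor1}--\eqref{gmtor2}; this extra bookkeeping is accurate and does not change the strategy.
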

\begin{proof}
    Assume that $C$ is  of   type $\{k, l, m\}$   over   $\mathbb{F}_{q}+u\mathbb{F}_{q}+u^2\mathbb{F}_{q}$.  
Then 
    From (\ref{char1})--(\ref{char3}), it follows that $\Tor_1(\C)$ is  Hermitian self-orthogonal.  Since $\dim (\Tor_1(\C))=k + l=\frac{n}{2}=\dim ((\Tor_1(\C))^{\perp_{\rm H}})$,  $\Tor_1(\C)$ is  Hermitian self-dual. From  (\ref{char1})--(\ref{char4}),  we have  $\Tor_2(\C) \subseteq \Res(\C)^{\perp_{\rm H}}$. Since $\dim (\Tor_2(\C))=k + 2l=n-k=\dim ((\Res(\C))^{\perp_{\rm H}})$,  we have  $\Tor_2(\C) = \Res(\C)^{\perp_{\rm H}}$.
%    
%    
%    
%    Conversely, assume that 1) and 2) hold. From 1), we have $\dim (\Tor_1(\C))=k + l$.  From 2),  $\dim (\Tor_2(\C))=k + 2l=n-k=\dim ((\Res(\C))^{\perp_{\rm H}})$. 
%        
%        
%    Using Proposition \ref{charhsd} and arguments similar to those  in the proof of {\cite[Corollary~1]{BNV2018}}, we have the  that  $\C$ is   Hermitian self-dual if and only if 
%    \begin{enumerate}[i)]
%        \item $\Res(\C)$ is Hermitian self-orthogonal,
%        \item $\Tor_1(\C)$ is Hermitian self-dual, and 
%        \item $\Tor_2(\C) = \Res(\C)^{\perp_{\rm H}}$.
%    \end{enumerate}
%    Since  $\Res(\C) \subseteq \Tor_1(\C)$, $\Res(\C)$ is always Hermitian self-orthogonal which implies that  i) and ii) are equivalent to 1. The proof is completed.
\end{proof}
Since $\Res(\C)=\Tor_0(\C)\subseteq \Tor_1(\C) \subseteq \Tor_2(\C)$, it can be concluded further that $\Res(\C)$ is Hermitian self-orthogonal for all Hermitian self-dual linear codes $\C  $  over $\mathbb{F}_{q}+u\mathbb{F}_{q}+u^2\mathbb{F}_{q}$.

From Corollary \ref{charHSD},     a  Hermitian self-dual code $\C$ of length $n$ over $\mathbb{F}_{q}+u\mathbb{F}_{q}+u^2\mathbb{F}_{q}$  can be  induced by  Hermitian self-dual linear codes of length $n$ over $\mathbb{F}_{q}$.  For a given Hermitian self-dual code $\C_1$ of length $n$ over $\mathbb{F}_q$, we first aim to determine the number of Hermitian self-dual linear  codes $\C$ of length $n$  over $\mathbb{F}_{q}+u\mathbb{F}_{q}+u^2\mathbb{F}_{q}$  such that  $\Tor_1(\C) = \C_1$. 

\begin{proposition} \label{HSDC1} Let $q$ be a square prime power and let $n$ be an even positive integer.  Let $\C_1$ be a Hermitian self-dual linear code of length $n$ over $\mathbb{F}_q$.  Then, for each $0\leq k\leq \frac{n}{2}$, there are $q^{\frac{kn}{2}}$ Hermitian self-dual linear codes of length $n$  over $\mathbb{F}_{q}+u\mathbb{F}_{q}+u^2\mathbb{F}_{q}$   corresponding to each subspace of $\C_1$ of dimension $k$.
\end{proposition}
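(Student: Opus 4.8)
The plan is to fix a $k$-dimensional subspace $D\subseteq\C_1$ and count the Hermitian self-dual codes $\C$ over $R:=\mathbb{F}_q+u\mathbb{F}_q+u^2\mathbb{F}_q$ with $\Res(\C)=D$ and $\Tor_1(\C)=\C_1$; summing this count over all $D$, and then over all $\C_1$, will afterwards give $NH_3(q,n)$. I would first record the rigid part of the structure. By Corollary \ref{charHSD}, such a $\C$ has type $\{k,\tfrac n2-k,\tfrac n2-k\}$ (so $|\C|=q^{3n/2}$), it satisfies $\Tor_2(\C)=\Res(\C)^{\perp_{\rm H}}=D^{\perp_{\rm H}}$, and $D$ is Hermitian self-orthogonal with $D\subseteq\C_1=\C_1^{\perp_{\rm H}}\subseteq D^{\perp_{\rm H}}$ and $\dim D^{\perp_{\rm H}}=n-k$. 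Existence is easy: $\C_0:=D+u\C_1+u^2D^{\perp_{\rm H}}$ is an $R$-submodule of $R^n$ (closed under multiplication by $u$ precisely because $D\subseteq\C_1\subseteq D^{\perp_{\rm H}}$), its torsion codes are $D,\C_1,D^{\perp_{\rm H}}$, it has cardinality $q^{k}q^{n/2}q^{n-k}=q^{3n/2}$, and the Hermitian $R$-product of any two of its elements expands into $\mathbb{F}_q$-Hermitian products of vectors lying in $D$, $\C_1$ or $D^{\perp_{\rm H}}$, each of which vanishes; hence $\C_0$ is Hermitian self-orthogonal, so Hermitian self-dual, and the set to be counted is nonempty.

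Next I would fix coordinates. Since $D\subseteq\C_1\subseteq D^{\perp_{\rm H}}$, one can choose an information set $K$ of $D^{\perp_{\rm H}}$, then an information set of $\C_1$ inside $K$, then an information set of $D$ inside that; the corresponding coordinate permutation (which preserves the Hermitian inner product, Hermitian self-duality and the count) puts $D,\C_1,D^{\perp_{\rm H}}$ simultaneously in the nested standard forms \eqref{gmres}, \eqref{gmtor1}, \eqref{gmtor2}. Because information sets of a code and of its Hermitian dual are complementary and $D=(D^{\perp_{\rm H}})^{\perp_{\rm H}}$, the complement of $K$ is automatically an information set for $D$, so the block $A_{40}$ in \eqref{gmlinearcode} is invertible. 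Every code $\C$ with torsion codes $(D,\C_1,D^{\perp_{\rm H}})$ then has a unique generator matrix \eqref{gmlinearcode} with $A_2,A_{30},A_{40},B_3,B_{40},C_4$ fixed by $D,\C_1,D^{\perp_{\rm H}}$ and $A_{31},A_{41},A_{42},B_{41}$ ranging freely over matrices over $\mathbb{F}_q$ of the prescribed sizes. Now I would apply Proposition \ref{charhsd}: conditions \eqref{char3} and \eqref{char4}, and the $u^{0}$-terms coming from \eqref{char1} and \eqref{char2}, all hold automatically (they are the orthogonality relations $D\perp_{\rm H}D$, $D\perp_{\rm H}\C_1$, $D\perp_{\rm H}D^{\perp_{\rm H}}$ and the self-orthogonality of $\C_1$, all of which follow from $D\subseteq\C_1=\C_1^{\perp_{\rm H}}$), and the remaining content is the $u^{1}$-term of \eqref{char2} together with the $u^{1}$- and $u^{2}$-terms of \eqref{char1}, which reduce to the three matrix equations over $\mathbb{F}_q$
\begin{align*}
&A_{31}B_3^{\dagger}+A_{41}B_{40}^{\dagger}+A_{40}B_{41}^{\dagger}=0,\\
&A_{30}A_{31}^{\dagger}+A_{31}A_{30}^{\dagger}+A_{40}A_{41}^{\dagger}+A_{41}A_{40}^{\dagger}=0,\\
&A_{31}A_{31}^{\dagger}+A_{41}A_{41}^{\dagger}+A_{40}A_{42}^{\dagger}+A_{42}A_{40}^{\dagger}=0.
\end{align*}
So the task becomes: count the quadruples $(A_{31},A_{41},A_{42},B_{41})$ solving this system.

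I would count in three stages, using that $A_{40}$ is invertible. For fixed $A_{31}$, the map $A_{41}\mapsto A_{30}A_{31}^{\dagger}+A_{40}A_{41}^{\dagger}$ is a bijection of the $k\times k$ matrices, so the second equation (which says this matrix is skew-Hermitian) has $q^{k(\frac n2-k)}\cdot s_k$ solutions $(A_{31},A_{41})$, where $s_k$ denotes the number of skew-Hermitian $k\times k$ matrices over $\mathbb{F}_q$. For each such $(A_{31},A_{41})$ the first equation then determines $B_{41}$ uniquely. Finally, writing $Y:=A_{40}A_{42}^{\dagger}$ (in bijection with $A_{42}$), the third equation reads $Y+Y^{\dagger}=-(A_{31}A_{31}^{\dagger}+A_{41}A_{41}^{\dagger})$ with Hermitian right-hand side, so its solution set is a coset of the skew-Hermitian matrices and has $s_k$ elements. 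Since $q$ is a square, $s_k=q^{k^2/2}$: each off-diagonal entry $m_{ij}$ with $i<j$ ranges freely over $\mathbb{F}_q$ and forces $m_{ji}=-\overline{m_{ij}}$, while each diagonal entry lies in a subgroup of size $\sqrt q$ (namely $\ker\Tr_{\mathbb{F}_q/\mathbb{F}_{\sqrt q}}$, which equals $\mathbb{F}_{\sqrt q}$ in characteristic $2$), giving $q^{\binom k2}(\sqrt q)^{k}=q^{k^2/2}$ (the same count as for Hermitian matrices). Multiplying the stage counts yields $q^{k(\frac n2-k)}\cdot q^{k^2/2}\cdot 1\cdot q^{k^2/2}=q^{k(\frac n2-k)+k^2}=q^{kn/2}$, which is the claim.

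The existence construction and the reduction (via Corollary \ref{charHSD} and Proposition \ref{charhsd}) to the three matrix equations are routine once those results are available. I expect the main obstacle to be the enumeration step: checking that $A_{40}$ may be taken invertible, decoupling the three equations, and in particular pinning down the count $s_k=q^{k^2/2}$ of (skew-)Hermitian $k\times k$ matrices over $\mathbb{F}_q$ uniformly in the characteristic --- this last point being where the hypothesis that $q$ is a square is used.
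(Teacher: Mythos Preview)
Your proof is correct and follows essentially the same approach as the paper: both reduce via Proposition~\ref{charhsd} to the three matrix equations you display, use the invertibility of $A_{40}$ to decouple them (determining $B_{41}$ and $C_4$ uniquely), and count the remaining $(A_{31},A_{41},A_{42})$ via the trace/skew-Hermitian argument to obtain $q^{k(n/2-k)}\cdot q^{k^2/2}\cdot q^{k^2/2}=q^{kn/2}$. The only cosmetic difference is that the paper shows $A_{40}$ is invertible by first proving directly that the block $H=\begin{bmatrix}\overline{A_{30}}&\overline{A_{40}}\\\overline{B_3}&\overline{B_{40}}\end{bmatrix}$ has an explicit inverse, whereas you argue via complementary information sets of $D$ and $D^{\perp_{\rm H}}$; both are valid, and your added existence construction $\C_0=D+u\C_1+u^2D^{\perp_{\rm H}}$ is a nice touch the paper omits.
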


\begin{proof}
    Let $\C_1$ be a Hermitian self-dual linear code of length $n$ over $\mathbb{F}_q$ with dimension $\frac{n}{2}=k + l$ and generator matrix
    􏰍􏰎􏰍 􏰎\begin{equation}
    \begin{bmatrix}A\\ 
    B
    \end{bmatrix}
    =\begin{bmatrix}I_k	&A_2  & A_{30}  & A_{40}\\ 
    0	&  I_l& B_3   & B_{4  0}
    \end{bmatrix},
    \end{equation}
    where the columns are grouped into blocks of sizes $k,l,l$ and $k$. 
    
    Since  $\C_1$ is Hermitian self-dual, we have 
    \begin{align}
    I_k+A_2A_2^{\dagger}+A_{30}A_{30}^{\dagger}+A_{40}A_{40}^{\dagger}&=0\label{char5},\\
    A_2+A_{30}B_3^{\dagger}+A_{40}B_{40}^{\dagger}&=0\label{char6},\\
    I_l+B_3B_3^\dagger+B_{40}B_{40}^\dagger&=0.\label{char7}
    \end{align}
    Let $H=\begin{bmatrix}\overline{A_{30}}	&\overline{A_{40} } \\ 
    \overline{B_3}	&  \overline{B_{40}}
    \end{bmatrix}$.  
    From (\ref{char5})--(\ref{char7}), it can be deduced that 
    \begin{align*}
    H(-H^{\dagger})&=-HH^{\dagger}\\
    &=-H\overline{H}^T\\
    &=\begin{bmatrix}-\overline{A_{30}}A_{30}^{T}-\overline{A_{40}}A_{40}^{T}	&-\overline{A_{30}}B_3^{T}-\overline{A_{40}}B_{40}^{T}  \\ 
    \left( -\overline{A_{30}}B_3^{T}-\overline{A_{40}}B_{40}^{T}\right)^{T}&  -\overline{B_3}B_3^{T}-\overline{B_{40}}B_{40}^{T}
    \end{bmatrix}\\
    &=\begin{bmatrix}I_k+\overline{A_2}A_2^{T}	&\overline{A_{2}}  \\ 
    A_{2}^{T}&  I_l
    \end{bmatrix}.
    \end{align*}
    Let $J=\begin{bmatrix}I_k	&-\overline{A_{2}}  \\ 
    -A_{2}^{T}	&  I_l+A_2^{T}\overline{A_2}
    \end{bmatrix}$. Then 
    $H(-H^{\dagger})J=\begin{bmatrix}I_k	&0 \\ 
    0	&  I_l
    \end{bmatrix}$ which implies that  $H$ is invertible.
    
    Let $\C_0$ be a  $k$-dimensional $\mathbb{F}_q$-subspace of $\C_1$  with generator matrix $A$. Since $\C_1$ is Hermitian self-dual, it follows  that $\C_0$ is Hermitian  self-orthogonal. Up to permutation of the last $(k + l)$ columns (if necessary), its follows that $C_0^{\perp_{\rm H}}$ has a generator matrix  of the form 
    \begin{equation}
    \begin{bmatrix}I_k	&A_2  & A_{30}  & A_{40}\\ 
    0	&  I_l& B_3   & B_{40}  \\ 
    0	& 0 & I_l & C_4 
    \end{bmatrix}. 
    \end{equation}
    Then $A_{30}=-A_{40}C_4^{\dagger}$  which implies that 
    $
    H=\begin{bmatrix}-\overline{A_{40}}C_4^T&\overline{A_{40} } \\ 
    \overline{B_{3}} 	&  \overline{B_{40}}
    \end{bmatrix}$. Since $H$ is invertible, it follows that 
    $A_{40}$ is invertible.
    
    Next, we determined the  matrices over $\mathbb{F}_{q}+u\mathbb{F}_{q}+u^2\mathbb{F}_{q}$ satisfying conditions (\ref{char1})--(\ref{char4}) which are equivalent
    to
    \begin{align}
    I_k+A_2A_2^{\dagger}+A_{3}A_{3}^{\dagger}+A_{4}A_{4}^{\dagger}&\equiv 0 \pmod {u^3}\label{char8}\\
    A_2+A_{3}B_3^{\dagger}+A_{4}B_{4}^{\dagger}&\equiv 0 \pmod {u^2}\label{char9}\\
    I_l+B_3B_3^{\dagger}+B_{4}B_{4}^{\dagger}&\equiv 0 \pmod {u}\label{char10}\\
    A_3+A_4C_4^{\dagger}&\equiv 0 \pmod {u}.\label{char11}
    \end{align}
    The matrices $A_2, B_3$ and $C_4$ are considered modulo $u$, i.e. all the entries in $A_2, B_3$ and $C_4$ are in $\mathbb{F}_q$. The matrices $A_3$ and $B_4$ are considered modulo $u^2$ while $A_4$ is considered modulo $u^3$. From these fact, let $A_3 = A_{30} +uA_{31}, B_4 = B_{40} +uB_{41}$ and $A_4 = A_{40} +uA_{41} +u^2A_{42}$, where $A_{31}, B_{41}, A_{41}$ and $A_{42}$ are matrices of appropriate sizes over $\mathbb{F}_q$. Therefore, we can write (\ref{char8}) as
    \begin{align*}
    \left(I_k + A_2A_2^{\dagger} + A_{30}A_{30}^{\dagger} + A_{40}A_{40}^{\dagger}\right) +u\left(\widetilde{A_{30}A_{31}^{\dagger}}+ \widetilde{A_{40}A_{41}^{\dagger}}\right)&\\
    +u^2\left(A_{31}A_{31}^{\dagger}+A_{41}A_{41}^{\dagger}+\widetilde{A_{40}A_{42}^{\dagger}}\right)&\equiv 0 \pmod {u^3},
    \end{align*}
    where $\widetilde{X}:=X+X^{\dagger}$. We can also write (\ref{char9}) as
    \[\left(A_2+A_{30}B_3^{\dagger}+A_{40}B_{40}^{\dagger}\right) +u\left(A_{31}B_3^{\dagger}+A_{40}B_{41}^{\dagger}+A_{41}B_{40}^{\dagger}\right)\equiv 0 \pmod {u^2} \]
    By substituting (\ref{char7}) into (\ref{char9}), we obtain that
    \[B_{41}^{\dagger}=-A_{40}^{-1}\left(A_{31}B_3^{\dagger}+A_{41}B_{40}^{\dagger}\right),\]
    From \eqref{char11}, $C_4$ is uniquely determined as
    \[C_4=\left(-A_{40}^{-1}A_{30}\right)^{\dagger}.\]
    It is sufficient  to focus on  (\ref{char8}) because (\ref{char7}) is the same as (\ref{char10}). From (\ref{char5}), we have  to determined  the matrices satisfying the following:
    \begin{align}
    \widetilde{A_{30}A_{31}^{\dagger}}+\widetilde{A_{40}A_{41}^{\dagger}}&=0,\label{charh1}\\
    A_{31}A_{31}^{\dagger}+A_{41}A_{41}^{\dagger}+\widetilde{A_{40}A_{42}^{\dagger}}&=0.\label{charh2}
    \end{align}
    Hence,  $\C$ is a Hermitian self-dual linear code if and only if conditions (\ref{charh1}) and (\ref{charh2}) are satisfied.
    
    Therefore, the number of Hermitian self-dual linear codes of length $n$ over  $\mathbb{F}_{q}+u\mathbb{F}_{q}+u^2\mathbb{F}_{q}$ whose the $1$st torsion is $\C_1$   equals  the number of solutions of the system of matrix equations (\ref{charh1}) and (\ref{charh2}).
    
    We take an arbitrary matrix $A_{31} \in M_{k\times l}(\mathbb{F}_q)$ and put $[g_{ij}] = \widetilde{A_{30}A_{31}^\dagger}$ and $[x_{ij} ] = A_{40} A_{41}^\dagger$. Then condition (\ref{charh1}) is equivalent to
    \[g_{ij} +x_{ij} +\overline{x_{ji}} =0.\]
    Then $g_{ii} = x_{ii} +\overline{x_{ii}}  = \Tr(x_{ii}) \in \mathbb{F}_{\sqrt{q}}$ for each $1\leq i \leq k$, where $\Tr :\mathbb{F}_{q}\rightarrow \mathbb{F}_{\sqrt{q}}$ is the {\em trace map}  defined by $\alpha \mapsto \overline{ \alpha }+ \alpha $ for all $\alpha \in \mathbb{F}_{q}$. Note that    $|\Tr^{-1}(a)|=\sqrt{q} $ for all $a\in \mathbb{F}_{\sqrt{q}}$.  Then we have $ x_{ii} \in \Tr^{-1}(g_{ii})$ for all $1\leq i \leq k$,   $ x_{ji} \in  \mathbb{F}_q$ and $ x_{ij} = -g_{ij}-x_{ij} $ for each $1\leq i<j \leq k$.
    Therefore  $A_{41} = (A_{40}^{-1}[x_{ij}])^{\dagger}$ . Thus we have $q^{kl}$ possible choices for $A_{31}$  and $q^{\frac{k(k-1)}{2}+\frac{k}{2}}=q^{\frac{k^2}{2}}$ for $A_{41}$.

    For  fixed matrices $A_{31}$ and $A_{41}$, let  $[h_{ij}]=A_{31}A_{31}^{\dagger} +A_{41}A_{41}^{\dagger}$ and $[y_{ij}]=A_{40}A_{42}^\dagger$. Then (\ref{charh2}) is equivalent to
    \[h_{ij} +y_{ij} +\overline{y_{ji} }=0.\]
    Using a similar argument as above, we have $q^{\frac{k^2}{2}}$ possible choices for $A_{42}$.
    
    Therefore, we have $q^{kl} \times q^{\frac{k^2}{2}}\times q^{\frac{k^2}{2}}=q^{k^2+kl}=q^{k(k+ l)}= q^{\frac{kn}{2}}$ possible choices
    for the matrices $A_{31}, A_{41}$ and $A_{42}$. Therefore, the desired  result is proved. 
\end{proof}

The number of distinct Hermitian self-dual linear codes of even length $n$ over $\mathbb{F}_{q}+u\mathbb{F}_{q}+u^2\mathbb{F}_{q}$ can be summarized in the following theorem.

\begin{theorem}\label{NH3} Let $q$ be a square prime power and let $n$ be a positive integer.  Then  the number of distinct Hermitian self-dual linear codes of  length $n$ over $\mathbb{F}_{q}+u\mathbb{F}_{q}+u^2\mathbb{F}_{q}$ is 
    \[NH_3(q,n)= \begin{cases}
     \sigma_{\rm H}(q,n)\displaystyle\sum_{k=0}^{n/2}\begin{bmatrix}\displaystyle\dfrac{n}{2}\\ 
    k
    \end{bmatrix}_qq^{kn/2}   &\text{ if } n \textbf{ is even},
    \\
    0&\text{ otherwise} . \end{cases}
    \]
\end{theorem}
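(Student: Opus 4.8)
The plan is to enumerate the Hermitian self-dual codes over $\mathbb{F}_{q}+u\mathbb{F}_{q}+u^2\mathbb{F}_{q}$ by partitioning them according to the pair $(\Tor_1(\C),\Res(\C))$. First I would dispose of the odd case: if $n$ is odd, then by the discussion preceding Proposition~\ref{charhsd} there is no Hermitian self-dual code of length $n$ over $\mathbb{F}_{q}+u\mathbb{F}_{q}+u^2\mathbb{F}_{q}$, so $NH_3(q,n)=0$, which also agrees with $\sigma_{\rm H}(q,n)=0$. Hence it suffices to treat even $n$, where a Hermitian self-dual code has type $\{k,l,l\}$ (recall $k=h$, $l=m$) and length $n=2(k+l)$, so that $\tfrac{n}{2}=k+l$.

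Next I would make the partition explicit. Given a Hermitian self-dual code $\C$ of length $n$ over $\mathbb{F}_{q}+u\mathbb{F}_{q}+u^2\mathbb{F}_{q}$, Corollary~\ref{charHSD} shows that $\C_1:=\Tor_1(\C)$ is a Hermitian self-dual linear code of length $n$ over $\mathbb{F}_q$, and $\C_0:=\Res(\C)=\Tor_0(\C)$ is an $\mathbb{F}_q$-subspace of $\C_1$ of dimension $k$, with $0\le k\le\tfrac{n}{2}$. Conversely, for a fixed Hermitian self-dual code $\C_1$ of length $n$ over $\mathbb{F}_q$ and a fixed $k$-dimensional subspace $\C_0$ of $\C_1$ (which is automatically Hermitian self-orthogonal, being contained in the Hermitian self-dual code $\C_1$), Proposition~\ref{HSDC1} guarantees that there are exactly $q^{kn/2}$ Hermitian self-dual codes $\C$ over $\mathbb{F}_{q}+u\mathbb{F}_{q}+u^2\mathbb{F}_{q}$ with $\Tor_1(\C)=\C_1$ and $\Res(\C)=\C_0$. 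Since $\C$ determines both $\Tor_1(\C)$ and $\Res(\C)$, this partitions the set of Hermitian self-dual codes of length $n$ over the ring into blocks indexed by the pairs $(\C_1,\C_0)$, the block of $(\C_1,\C_0)$ having size $q^{(\dim\C_0)n/2}$.

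Finally I would do the count. There are $\sigma_{\rm H}(q,n)$ choices for $\C_1$; for each of them $\C_1$ is an $\tfrac{n}{2}$-dimensional $\mathbb{F}_q$-vector space, so there are $\begin{bmatrix}n/2\\ k\end{bmatrix}_q$ choices of a $k$-dimensional subspace $\C_0$; and each pair $(\C_1,\C_0)$ contributes $q^{kn/2}$ codes. Summing over $k$ gives
\[
NH_3(q,n)=\sigma_{\rm H}(q,n)\sum_{k=0}^{n/2}\begin{bmatrix}\dfrac{n}{2}\\ k\end{bmatrix}_q q^{kn/2},
\]
which is the asserted formula.

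The substantive computation has already been carried out in Proposition~\ref{HSDC1}, so the only point here requiring care is verifying that the assignment $\C\mapsto(\Tor_1(\C),\Res(\C))$ really does organize the codes into the claimed blocks — that no code is counted twice and that every Hermitian self-dual code arises. This is immediate once one notes that $\Tor_1$ and $\Res$ are genuine invariants of $\C$ and that Proposition~\ref{HSDC1} enumerates \emph{all} codes with prescribed values of these two invariants; I do not anticipate any genuine obstacle beyond this bookkeeping.
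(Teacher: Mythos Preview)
Your proposal is correct and follows exactly the approach the paper sets up: the paper does not write out a separate proof of Theorem~\ref{NH3}, but the intended argument is precisely the one you give --- partition the Hermitian self-dual codes over $\mathbb{F}_{q}+u\mathbb{F}_{q}+u^2\mathbb{F}_{q}$ by the pair $(\Tor_1(\C),\Res(\C))$, use Corollary~\ref{charHSD} to see that $\Tor_1(\C)$ is Hermitian self-dual over $\mathbb{F}_q$, and invoke Proposition~\ref{HSDC1} to count the fiber over each pair. Your bookkeeping about the partition being a genuine bijection is the only thing the paper leaves implicit, and you have handled it correctly.
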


%%%%%%%%%%%%%%%%%%%%%%%%%%
\section{Self-Dual Quasi-Abelian Codes  over Principal Ideal Group Algebras}
In this section, the study of quasi-abelian codes over  principal ideal group algebras is given. In the special case where the field characteristic is $3$ and the Sylow $3$-subgroup of the underlying finite abelian group has order $3$, complete characterization and enumeration of quasi-abelian codes  and self-dual quasi-abelian codes are presented in terms linear codes and self-dual linear  codes over $\mathbb{F}_{3^m}+u\mathbb{F}_{3^m}+u^2\mathbb{F}_{3^m}$  obtained  in \cite{BNV2018}, \cite{CG2015}  and Section  \ref{sec:3}.

\subsection{Group Rings and Quasi-Abelian Codes}
Let $R$ be a finite commutative ring with nonzero identity and let $G$ be a finite abelian group. Then
\begin{align*}
R[G]= \left\{ \displaystyle\sum_{g\in G } \alpha_gY^g \mid \alpha_g \in R, g\in G\right\}
\end{align*}
is a commutative ring under the addition and multiplication   given for  the usual polynomial ring over~$R$ with indeterminate~$Y$, where the indices are computed additively in~$G$.  The ring $R[G]$ is called a {\em group ring of~$G$ over~$R$}.  
In the case where~$R$ is the finite field~$\mathbb{F}_{p^m}$, the group ring $\mathbb{F}_{p^m}[G]$ is called a {\em group algebra of~$G$ over~$\mathbb{F}_{p^m}$} and it is called a {\em Principal Ideal Group Algebra (PIGA)} if every ideal  in $\mathbb{F}_{p^m}[G]$ is principal.  The readers may refer to  \cite{MCPSS2002} for more details on group rings. 
A linear code of length $|G|$ over $R$ can be  viewed as an  embedded  $R$-submodule of the $R$-module   in $R[G]$ by indexing the $|G|$-tuples by the elements in~$G$.
Given a subgroup $H$ of $G$ with index $n=[G:H]$, a linear code $\C$ of length $|G|$ viewed as an $R$-submodule of $R[G]$    is called an $H${\em -quasi-abelian code} (specifically, an $H${\em-quasi-abelian code of index $n$})  in $R[G]$ if $C$ is an $ R [ H ]$-module, i.e., $\C$ is closed under the multiplication by the elements in $R[H]$. Such a code will be called a quasi-abelian code if $H$ is not specified or where it is clear in the context.

Let $\{g_1,g_2,\ldots ,g_n\}$ be a fixed set of representatives of the cosets of $H$ in $G$. Let $\mathcal{R}:= \mathbb{F}_q[H]$. Define 􏰃$\Phi : \mathbb{F}_q[G]\to  \mathcal{R}^n$ by
\[\Phi \displaystyle\left(\displaystyle\sum_{h\in H}\sum_{i=1}^{n}\alpha_{h+g_i}Y^{h+g_i}\right)=(\alpha_1(Y),\alpha_2(Y),\ldots , \alpha_n(Y)),\]
where $\alpha_i(Y)=\displaystyle\sum_{h\in H}\alpha_{h+g_i}Y^h\in \mathcal{R}$ for all $i=1,2,\ldots, n$.
It is well known that 􏰃$\Phi$ is an $\mathcal{R}$-module isomorphism interpreted as follows.
\begin{lemma}[{\cite[Lemma 2.1]{JL2015}}]\label{relation}
    The map 􏰃$\Phi$ induces a one-to-one correspondence between $H$-quasi-abelian codes in $\mathbb{F}_q [G]$ and linear codes of length $n$ over $\mathcal{R}$.
\end{lemma}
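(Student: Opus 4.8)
The plan is to prove Lemma~\ref{relation} by verifying directly that the stated map $\Phi$ is a well-defined, bijective $\mathcal R$-module homomorphism, and then observing that $\mathcal R$-submodules of $\mathcal R^n$ correspond under $\Phi$ exactly to $H$-quasi-abelian codes in $\mathbb F_q[G]$. First I would check that $\Phi$ is $\mathbb F_q$-linear and bijective as a map of vector spaces: since $\{g_1,\dots,g_n\}$ is a transversal of $H$ in $G$, every element of $G$ is written uniquely as $h+g_i$ with $h\in H$, so the coefficient vector $(\alpha_{h+g_i})$ is an arbitrary element of $\mathbb F_q^{|H|\cdot n}$, and grouping the coordinates indexed by $H+g_i$ into the single coordinate $\alpha_i(Y)\in\mathcal R$ is plainly a bijection $\mathbb F_q[G]\to\mathcal R^n$. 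Linearity over $\mathbb F_q$ is immediate from the definition.

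Next I would show that $\Phi$ is $\mathcal R$-linear, which is the heart of the argument. Here $\mathcal R=\mathbb F_q[H]$ acts on $\mathbb F_q[G]$ by multiplication (using that $H\leq G$, so $\mathbb F_q[H]$ is a subring of $\mathbb F_q[G]$) and on $\mathcal R^n$ coordinatewise. It suffices to check compatibility with multiplication by a single monomial $Y^{h_0}$ with $h_0\in H$, since these span $\mathcal R$ over $\mathbb F_q$ and $\Phi$ is already $\mathbb F_q$-linear. Multiplying $\sum_{h\in H}\sum_i \alpha_{h+g_i}Y^{h+g_i}$ by $Y^{h_0}$ sends the coefficient of $Y^{h+g_i}$ to the coefficient of $Y^{(h+h_0)+g_i}$, i.e. it reindexes within each coset $H+g_i$ by the translation $h\mapsto h+h_0$; but that translation is precisely the effect of multiplying $\alpha_i(Y)$ by $Y^{h_0}$ in $\mathcal R$. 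Hence $\Phi(Y^{h_0}\cdot a)=Y^{h_0}\cdot\Phi(a)$ for all $a\in\mathbb F_q[G]$, and $\Phi$ is an $\mathcal R$-module isomorphism.

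Finally I would translate this into the correspondence of codes. By definition an $H$-quasi-abelian code in $\mathbb F_q[G]$ is an $\mathcal R$-submodule of $\mathbb F_q[G]$, and a linear code of length $n$ over $\mathcal R$ is an $\mathcal R$-submodule of $\mathcal R^n$. Since $\Phi$ is an $\mathcal R$-module isomorphism, $\mathcal C\mapsto\Phi(\mathcal C)$ is an inclusion-preserving bijection between the two sets of submodules, which is exactly the asserted one-to-one correspondence. I do not anticipate a genuine obstacle here; the only point requiring care is the bookkeeping in the $\mathcal R$-linearity check — making sure the transversal is used consistently so that "multiplication by $Y^{h_0}$ in $\mathbb F_q[G]$" really does restrict to "multiplication by $Y^{h_0}$ in each $\mathcal R$-coordinate" and does not permute the coordinates $\alpha_1,\dots,\alpha_n$ among themselves.
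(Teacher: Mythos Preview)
Your proof is correct. The paper does not supply its own proof of this lemma: it is quoted verbatim as \cite[Lemma 2.1]{JL2015} and prefaced only by the remark that ``it is well known that $\Phi$ is an $\mathcal R$-module isomorphism,'' so your direct verification (bijectivity plus $\mathcal R$-linearity checked on the monomials $Y^{h_0}$, then passage to submodules) is exactly the standard argument one would give and fills in what the paper leaves implicit.
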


We note that   a group algebra   $\mathbb{F}_{p^m}[H]$   is semisimple if and only if the Sylow $p$-subgroup of $H$ is trivial (see \cite[Chapter 2: Theorem 4.2]{Pa1977}), and  it   is    a PIGA   if and only if  he Sylow $p$-subgroup of $H$ is cyclic (see \cite{FiSe1976}).  
In \cite{JL2015}, complete characterization and enumeration of  $H$-quasi-abelian codes in  $\mathbb{F}_{p^m}[G]$  have been established in the case where $\mathbb{F}_{p^m}[H]$  is semisimple. Here, we focus on a more general case where $\mathbb{F}_{p^m}[H]$  is a PIGA, or equivalently,  the Sylow $p$-subgroup of $H$ is cyclic. Precisely, $H\equiv  A\times \mathbb{Z}_{p^mi} $ and  $G\cong A\times \mathbb{Z}_{p^s} \times B$, where $s$ is a non-negative integer,  $A$ and $B$  are finite abelian groups such that $p\nmid |A|$. General characterization is given in Subsection  4.2.  
In the special case where $p=3$ and $s=1$, complete characterization and enumeration of  $A\times \mathbb{Z}_3$-quasi-abelian codes and  self-dual $A\times \mathbb{Z}_3$-quasi-abelian codes   in $\mathbb{F}_{3^m}[A\times \mathbb{Z}_3\times B]$  are  given in Subsection 4.3.

\subsection{$A\times\mathbb{Z}_{p^s}$-Quasi-Abelian Codes  in $\mathbb{F}_{p^m}[A\times\mathbb{Z}_{p^s}\times B]$}

We focus on $H$-quasi-abelian codes in $\mathbb{F}_{p^m}[G]$, where  $\mathbb{F}_{p^m}[H]$  is a PIGA. Equivalently,    $H\equiv  A\times \mathbb{Z}_{p^s} $ and  $G\cong A\times \mathbb{Z}_{p^s} \times B$, where $s$ is a positive integer,  $A$ and $B$  are finite abelian groups such that $p\nmid |A|$ (see \cite{FiSe1976} and \cite{JLS2013}).

Note that the group algebra  $\mathbb{F}_{p^m}[A]$  is semisimple \cite{SD1967} and it can be decomposed using the Discrete Fourier Transform in \cite{RS1992} (see  \cite{JLS2013}  and \cite{JLLX2012} for more details). For completeness, the decomposition used in this paper are summarized as follows.

For co-prime positive integers $i$  and $j$, denote by ${\rm ord}_j(i)$ the multiplicative order of $i$ modulo $j$. For each $a\in A$, denote by ${\rm ord}(a)$ the additive order of $a$ in $A$ and the  $p^m${\em -cyclotomic class}   of $A$ containing $a\in A$  is defined to be the set
\begin{align*}
S_{p^m}(a)&:= \{p^{mi}\cdot a \mid i=0,1,\dots\}= \{p^{mi}\cdot a \mid 0\leq i< {\rm ord}_{{\rm ord}(a)}(p^m) \}, 
\end{align*}
where $p^{ki}\cdot a:= \sum\limits_{j=1}^{p^{mi}}a$ in $A$. A subset $\{a_1,a_2,\ldots ,a_t \}$ of~$A$ is called a {\em complete set of representatives} of $p^m$-cyclotomic classes of~$A$ if~$S_{p^m}(a_1), S_{p^m}(a_2), \ldots , S_{p^m}(a_t)$ are distinct and~$\displaystyle\bigcup_{i=1}^{t}S_{p^m}(a_i)=A$.

An {\em idempotent} in $\mathbb{F}_{p^m}[A]$ is a nonzero element $e$ such that $e^2=e$. It is called {\em primitive} if for every other idempotent $f$, either $ef=e$ or $ef=0$.  The existence of primitive idempotent elements in~$\mathbb{F}_{p^m}[A]$ is proved in \cite{DKL2000}.  They are induced by the $p^m$-cyclotomic classes of $A$ (see \cite[Proposition II.4]{DKL2000}). Consequently, $\mathbb{F}_{p^m}[A]$ can be viewed as a direct sum of principal ideals generated by these primitive idempotent elements.
\begin{proposition}[{\cite[Corollary III.6]{DKL2000}}]\label{decomposeR}Let $\{ a_1,a_2,\dots, a_t\}$ be a complete set of representatives of $p^m$-cyclotomic classes  of a finite abelian group~$A$ where $p\nmid |A|$ and let  $e_i$ be the primitive idempotent induced by $S_{p^m}(a_i)$  for all $1\leq i\leq t$. Then
    \[ \mathbb{F}_{p^m}[A]= \displaystyle\bigoplus_{i=1}^{t}\mathbb{F}_{p^m}[A]e_i %= \displaystyle\bigoplus_{i=1}^{t}\mathcal{R}e_i 
    \cong \prod_{i=1}^{t} \mathbb{F}_{p^{m_i}},  \]
    where $m_i=m\cdot {\rm ord}_{{\rm ord}(a_i)}(p^m)$.
\end{proposition}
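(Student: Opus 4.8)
The plan is to prove that $\mathbb{F}_{p^m}[A]$ is a commutative semisimple $\mathbb{F}_{p^m}$-algebra, hence a finite product of finite extension fields of $\mathbb{F}_{p^m}$ indexed by its primitive idempotents, and then to identify each factor by a Galois-descent argument applied to the character (Discrete Fourier Transform) decomposition over a splitting field.

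First, I would record the structural facts. Since $p\nmid|A|$, Maschke's theorem gives that $\mathbb{F}_{p^m}[A]$ is semisimple, and since $A$ is abelian the algebra is commutative; a finite-dimensional commutative semisimple $\mathbb{F}_{p^m}$-algebra is, by Wedderburn--Artin (equivalently the Chinese Remainder Theorem), a direct product of fields, each a finite extension of $\mathbb{F}_{p^m}$, and its primitive idempotents are unique, pairwise orthogonal, and sum to $1$. Taking the $e_i$ to be the primitive idempotents attached to the $p^m$-cyclotomic classes as in \cite[Proposition II.4]{DKL2000}, the decomposition $\mathbb{F}_{p^m}[A]=\bigoplus_{i=1}^{t}\mathbb{F}_{p^m}[A]e_i$ is automatic; the remaining work is to show $\mathbb{F}_{p^m}[A]e_i\cong\mathbb{F}_{p^{m_i}}$ with $m_i=m\cdot\mathrm{ord}_{\mathrm{ord}(a_i)}(p^m)$, i.e. to compute $\dim_{\mathbb{F}_{p^m}}\mathbb{F}_{p^m}[A]e_i$.

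For this I would pass to the splitting field $\mathbb{K}=\mathbb{F}_{p^m}(\zeta)$, where $\zeta$ is a primitive root of unity of order $\exp(A)$ (defined since $\gcd(\exp(A),p)=1$). Then the character group $\widehat A=\mathrm{Hom}(A,\mathbb{K}^\times)$ has $|A|$ elements, the map $Y^a\mapsto(\chi(a))_{\chi\in\widehat A}$ is a $\mathbb{K}$-algebra isomorphism $\mathbb{K}[A]\xrightarrow{\ \sim\ }\prod_{\chi\in\widehat A}\mathbb{K}$, and its primitive idempotents are $\varepsilon_\chi=|A|^{-1}\sum_{a\in A}\chi(-a)Y^a$ (orthogonality of characters). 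The group $\Gamma=\mathrm{Gal}(\mathbb{K}/\mathbb{F}_{p^m})$ is cyclic, generated by the Frobenius $\phi\colon x\mapsto x^{p^m}$; it acts on $\mathbb{K}[A]$ through the coefficients with fixed ring $\mathbb{F}_{p^m}[A]$, and a direct computation gives $\phi(\varepsilon_\chi)=\varepsilon_{\chi^{p^m}}$, where $\chi^{p^m}(a)=\chi(a)^{p^m}=\chi(p^m\cdot a)$. Hence the primitive idempotents of $\mathbb{F}_{p^m}[A]$ are exactly the $\Gamma$-orbit sums $\sum_{\varepsilon\in\mathcal O}\varepsilon$, and for such an orbit sum $e$ one has $\mathbb{F}_{p^m}[A]e\otimes_{\mathbb{F}_{p^m}}\mathbb{K}\cong\prod_{\varepsilon\in\mathcal O}\mathbb{K}\varepsilon\cong\mathbb{K}^{|\mathcal O|}$, so $\dim_{\mathbb{F}_{p^m}}\mathbb{F}_{p^m}[A]e=|\mathcal O|$; being an indecomposable commutative semisimple $\mathbb{F}_{p^m}$-algebra, $\mathbb{F}_{p^m}[A]e$ is a field, necessarily $\cong\mathbb{F}_{p^{m|\mathcal O|}}$.

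It then remains to match $\Gamma$-orbits with $p^m$-cyclotomic classes. Fixing a group isomorphism $A\xrightarrow{\ \sim\ }\widehat A$, $a\mapsto\chi_a$ (possible as $A$ is finite abelian and $\mathbb{K}$ carries enough roots of unity), the identity $\chi_a^{p^m}=\chi_{p^m\cdot a}$ shows that the $\Gamma$-action $\varepsilon_{\chi_a}\mapsto\varepsilon_{\chi_a^{p^m}}$ is carried to multiplication by $p^m$ on $A$; therefore the $\Gamma$-orbit of $\varepsilon_{\chi_{a_i}}$ corresponds to $S_{p^m}(a_i)$ and has cardinality $\mathrm{ord}_{\mathrm{ord}(a_i)}(p^m)$. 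Combining with the previous paragraph, $\mathbb{F}_{p^m}[A]e_i\cong\mathbb{F}_{p^{m_i}}$ with $m_i=m\cdot\mathrm{ord}_{\mathrm{ord}(a_i)}(p^m)$, and summing over a complete set of representatives yields $\mathbb{F}_{p^m}[A]=\bigoplus_{i=1}^{t}\mathbb{F}_{p^m}[A]e_i\cong\prod_{i=1}^{t}\mathbb{F}_{p^{m_i}}$; as a consistency check, $\sum_i\mathrm{ord}_{\mathrm{ord}(a_i)}(p^m)=\sum_i|S_{p^m}(a_i)|=|A|=\dim_{\mathbb{F}_{p^m}}\mathbb{F}_{p^m}[A]$. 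The main obstacle I anticipate is the Galois-descent bookkeeping: verifying $\phi(\varepsilon_\chi)=\varepsilon_{\chi^{p^m}}$, checking that the $\Gamma$-orbit sums really are the primitive idempotents over $\mathbb{F}_{p^m}$ and coincide with those of \cite[Proposition II.4]{DKL2000} attached to cyclotomic classes, and confirming that the chosen isomorphism $A\cong\widehat A$ intertwines the two multiplication-by-$p^m$ maps; everything else is standard finite-field and semisimple-algebra theory.
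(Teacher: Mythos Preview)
Your argument is correct and follows the standard Galois-descent route for decomposing a semisimple commutative group algebra over a non-splitting field. Note, however, that the paper does not actually prove this proposition: it is quoted verbatim as \cite[Corollary III.6]{DKL2000} and used as a black box, so there is no ``paper's own proof'' to compare against. Your writeup thus supplies a self-contained proof where the paper defers to the literature; the approach you take (Maschke, Wedderburn--Artin, DFT over a splitting field, then identifying $\mathrm{Gal}(\mathbb{K}/\mathbb{F}_{p^m})$-orbits of characters with $p^m$-cyclotomic classes of $A$) is exactly the argument underlying the cited reference, so there is no substantive divergence to discuss.
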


A PIGA $	\mathbb{F}_{p^m}[A\times\mathbb{Z}_{p^s}]  $ can be decomposed in the following theorem. 
\begin{theorem}\label{decomposeabelian0}Let   $s$ be a positive integer. Let $\{ a_1,a_2,\dots, a_t\}$ be a complete set of representatives of $p^m$-cyclotomic classes  of a finite abelian group~$A$ where $p\nmid |A|$.  
    Then
    \begin{align*}
    \mathbb{F}_{p^m}[A\times\mathbb{Z}_{p^s}]      \cong \displaystyle\prod_{i=1}^{t}\left(\mathbb{F}_{p^{m_i}}+u\mathbb{F}_{p^{m_i}}+\dots +u^{p^s-1}\mathbb{F}_{p^{m_i}}\right)
    \end{align*}
    where $m_i=m\cdot {\rm ord}_{{\rm ord}(a_i)}(p^m)$ for all $1\leq i\leq t$.
\end{theorem}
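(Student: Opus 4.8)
The plan is to peel off the $\mathbb{Z}_{p^s}$ factor, exploit the characteristic-$p$ factorization of $x^{p^s}-1$, and then reduce everything to Proposition \ref{decomposeR}.

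First I would use the standard ring isomorphism $R[G_1\times G_2]\cong\bigl(R[G_1]\bigr)[G_2]$ for group rings over a commutative ring $R$, which here gives
\[
\mathbb{F}_{p^m}[A\times\mathbb{Z}_{p^s}]\;\cong\;\bigl(\mathbb{F}_{p^m}[A]\bigr)[\mathbb{Z}_{p^s}]\;\cong\;\mathbb{F}_{p^m}[A][x]/\langle x^{p^s}-1\rangle ,
\]
the last step identifying a generator of $\mathbb{Z}_{p^s}$ with $x$. Since $\mathbb{F}_{p^m}[A]$ has characteristic $p$, the Frobenius identity gives $x^{p^s}-1=(x-1)^{p^s}$, so the change of variable $u:=x-1$ — a ring automorphism of $\mathbb{F}_{p^m}[A][x]$ fixing $\mathbb{F}_{p^m}[A]$ pointwise and carrying $\langle x^{p^s}-1\rangle$ onto $\langle u^{p^s}\rangle$ — yields
\[
\mathbb{F}_{p^m}[A\times\mathbb{Z}_{p^s}]\;\cong\;\mathbb{F}_{p^m}[A][u]/\langle u^{p^s}\rangle .
\]

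Next I would invoke Proposition \ref{decomposeR}, which provides a ring isomorphism $\mathbb{F}_{p^m}[A]\cong\prod_{i=1}^{t}\mathbb{F}_{p^{m_i}}$ with $m_i=m\cdot{\rm ord}_{{\rm ord}(a_i)}(p^m)$, and transport it through the polynomial-and-quotient construction. For any finite family of commutative rings $R_1,\dots,R_t$ one has ring isomorphisms $\bigl(R_1\times\dots\times R_t\bigr)[u]\cong R_1[u]\times\dots\times R_t[u]$ under which the ideal generated by $u^{p^s}$ corresponds to the product of the ideals $\langle u^{p^s}\rangle$, whence
\[
\bigl(R_1\times\dots\times R_t\bigr)[u]/\langle u^{p^s}\rangle\;\cong\;\bigl(R_1[u]/\langle u^{p^s}\rangle\bigr)\times\dots\times\bigl(R_t[u]/\langle u^{p^s}\rangle\bigr).
\]
Taking $R_i=\mathbb{F}_{p^{m_i}}$, recalling from Section \ref{sec:pre} that $\mathbb{F}_{p^{m_i}}[u]/\langle u^{p^s}\rangle\cong\mathbb{F}_{p^{m_i}}+u\mathbb{F}_{p^{m_i}}+\dots+u^{p^s-1}\mathbb{F}_{p^{m_i}}$, and then combining with the reduction of the first paragraph, one obtains exactly the asserted decomposition.

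The argument is essentially formal, so I do not anticipate a genuine obstacle; the points needing care are all bookkeeping in the second step. One must verify that the isomorphism in Proposition \ref{decomposeR} is a \emph{ring} isomorphism — it is, being the direct sum of the ring isomorphisms $\mathbb{F}_{p^m}[A]e_i\cong\mathbb{F}_{p^{m_i}}$ attached to the orthogonal primitive idempotents $e_i$ — that $u=x-1$ defines a ring automorphism of $\mathbb{F}_{p^m}[A][x]$ sending $\langle x^{p^s}-1\rangle$ onto $\langle u^{p^s}\rangle$ (this uses $(u+1)^{p^s}=u^{p^s}+1$ in characteristic $p$), and that the operation $R\mapsto R[u]/\langle u^{p^s}\rangle$ on commutative rings commutes with finite direct products. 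The hypothesis $p\nmid|A|$ enters only through Proposition \ref{decomposeR}, which requires $\mathbb{F}_{p^m}[A]$ to be semisimple; the characteristic-$p$ step handling the $\mathbb{Z}_{p^s}$ factor is unconditional.
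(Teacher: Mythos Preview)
Your proof is correct and follows essentially the same route as the paper: both use the identification $\mathbb{F}_{p^m}[A\times\mathbb{Z}_{p^s}]\cong(\mathbb{F}_{p^m}[A])[\mathbb{Z}_{p^s}]$, the semisimple decomposition of $\mathbb{F}_{p^m}[A]$ from Proposition~\ref{decomposeR}, and the substitution $Y\mapsto u+1$ (equivalently $u=x-1$) together with $(x-1)^{p^s}=x^{p^s}-1$ in characteristic $p$ to identify $\mathbb{F}_{p^{m_i}}[\mathbb{Z}_{p^s}]$ with $\mathbb{F}_{p^{m_i}}[u]/\langle u^{p^s}\rangle$. The only cosmetic difference is ordering---you perform the substitution before decomposing while the paper decomposes first---and your write-up is in fact more explicit about why the substitution step works.
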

\begin{proof} For each $1\leq i\leq t$, let  $e_i$ be  the primitive idempotent induced by $S_{p^m}(a_i)$.  From Proposition~\ref{decomposeR}, we have 
    \begin{align}\label{decomab1}
    \mathbb{F}_{p^m}[A]
    \cong    	\mathbb{F}_{p^m}[A\times\mathbb{Z}_{p^s}] e_i \cong \prod_{i=1}^{t} \mathbb{F}_{p^{m_i}},
    \end{align}
    and hence, 
    \begin{align}\label{decomab2}
    \mathbb{F}_{p^m}[A\times\mathbb{Z}_{p^s}]  \cong  (	\mathbb{F}_{p^m}[A] )[\mathbb{Z}_{p^s}]\cong    \prod_{i=1}^{t} \mathbb{F}_{p^{m_i}} [\mathbb{Z}_{p^s}].
    \end{align}
    
    Under the ring isomorphism that fixes the elements~in $\mathbb{F}_{p^{m_i}}$ and $Y^1 \mapsto u+1,$ it is not difficult to see that 
    \begin{align}\label{decomab4}
    \mathbb{F}_{p^{m_i}}[\mathbb{Z}_{p^s}]\cong \mathbb{F}_{p^{m_i}}+u\mathbb{F}_{p^{m_i}}+\dots+u^{p^s-1}\mathbb{F}_{p^{m_i}}
    \end{align} as rings. Therefore,
    \begin{align}
    \mathbb{F}_{p^m}[A\times\mathbb{Z}_{p^s}]      \cong \displaystyle\prod_{i=1}^{t}\left(\mathbb{F}_{p^{m_i}}+u\mathbb{F}_{p^{m_i}}+\dots+u^{p^s-1}\mathbb{F}_{p^{m_i}}\right)
    \end{align} as desired. 
\end{proof}

For each finite abelian group $B$ of order $n$, every  $A\times \mathbb{Z}_{p^s}$-quasi-abelian code in  $\mathbb{F}_{p^m}[A\times\mathbb{Z}_{p^s}\times B]$ can be viewed as a linear code of length $n$ over  $\mathbb{F}_{p^m}[A\times\mathbb{Z}_{p^s}]  $  by  Lemma \ref{relation}.    The next corollary follows  directly from  Theorem~\ref{decomposeabelian0}.
\begin{corollary}\label{charabelian} Let $s$  and $m$ be positive integers. Let $A$ and $B$ be   finite abelian groups such that $|B|=n$ and   $p \nmid |A|$. Then every  $A\times \mathbb{Z}_{p^s}$-quasi-abelian code $\C$  in  $\mathbb{F}_{p^m}[A\times\mathbb{Z}_{p^s}\times B]$  can be viewed as
    \begin{align*}
    \C\cong  \prod_{i=1}^{t} \C_i, 
    \end{align*}
    where $\C_i$ is a linear code of length $n$ over $\mathbb{F}_{p^{m_i}}+u\mathbb{F}_{p^{m_i}}+\dots+u^{p^s-1}\mathbb{F}_{p^{m_i}}$ for all  $i=1,2,\dots,t$.
\end{corollary}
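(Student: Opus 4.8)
The plan is to combine Lemma~\ref{relation} with the ring decomposition of Theorem~\ref{decomposeabelian0}, and then to invoke the elementary fact that a linear code over a finite direct product of rings splits as a direct product of linear codes over the factors.

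First I would set $H=A\times\mathbb{Z}_{p^s}$ and $G=A\times\mathbb{Z}_{p^s}\times B$, so that $H\leq G$ with index $[G:H]=|B|=n$. Applying Lemma~\ref{relation} to this pair, the map $\Phi$ gives a one-to-one correspondence between $A\times\mathbb{Z}_{p^s}$-quasi-abelian codes $\C$ in $\mathbb{F}_{p^m}[G]$ and linear codes of length $n$ over $\mathcal{R}:=\mathbb{F}_{p^m}[A\times\mathbb{Z}_{p^s}]$; moreover this correspondence is realized by an $\mathcal{R}$-module isomorphism $\Phi\colon\mathbb{F}_{p^m}[G]\to\mathcal{R}^n$ carrying $\C$ to an $\mathcal{R}$-submodule $\widetilde{\C}\subseteq\mathcal{R}^n$. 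Next I would apply Theorem~\ref{decomposeabelian0}, which furnishes a ring isomorphism $\mathcal{R}\cong\prod_{i=1}^{t}R_i$, where $R_i:=\mathbb{F}_{p^{m_i}}+u\mathbb{F}_{p^{m_i}}+\dots+u^{p^s-1}\mathbb{F}_{p^{m_i}}$ and $m_i=m\cdot{\rm ord}_{{\rm ord}(a_i)}(p^m)$. Coordinate-wise, this induces an isomorphism of $\mathcal{R}$-modules $\mathcal{R}^n\cong\prod_{i=1}^{t}R_i^n$. Let $\epsilon_1,\dots,\epsilon_t\in\mathcal{R}$ be the images of the standard primitive idempotents of $\prod_{i=1}^{t}R_i$, so that $\epsilon_i\epsilon_j=\delta_{ij}\epsilon_i$, $\sum_{i=1}^{t}\epsilon_i=1$, and $\epsilon_i\mathcal{R}\cong R_i$ as rings.

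The one routine verification — and the conceptual core of the argument — is that, under $\mathcal{R}\cong\prod_{i=1}^{t}R_i$, the $\mathcal{R}$-submodules of $\mathcal{R}^n$ are exactly the products $\prod_{i=1}^{t}\C_i$ with each $\C_i$ an $R_i$-submodule of $R_i^n$. Indeed, for any $\mathcal{R}$-submodule $\widetilde{\C}\subseteq\mathcal{R}^n$ one has $\widetilde{\C}=\bigoplus_{i=1}^{t}\epsilon_i\widetilde{\C}$ because the $\epsilon_i$ are orthogonal idempotents summing to $1$; each $\epsilon_i\widetilde{\C}$ is an $\epsilon_i\mathcal{R}$-submodule of $\epsilon_i\mathcal{R}^n$, hence, after the identification $\epsilon_i\mathcal{R}\cong R_i$, an $R_i$-submodule $\C_i$ of $R_i^n$, i.e. a linear code of length $n$ over $R_i$. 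Conversely, every such product is visibly an $\mathcal{R}$-submodule. Applying this to $\widetilde{\C}=\Phi(\C)$ yields $\C\cong\widetilde{\C}\cong\prod_{i=1}^{t}\C_i$ with $\C_i$ a linear code of length $n$ over $\mathbb{F}_{p^{m_i}}+u\mathbb{F}_{p^{m_i}}+\dots+u^{p^s-1}\mathbb{F}_{p^{m_i}}$, as claimed.

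I do not expect any genuine obstacle here: the statement is a direct corollary, and the only thing to keep straight is the bookkeeping among the three isomorphisms involved — the map $\Phi$ of Lemma~\ref{relation}, the CRT-type splitting $\mathcal{R}\cong\prod_{i=1}^{t}R_i$ of Theorem~\ref{decomposeabelian0}, and the identifications $\epsilon_i\mathcal{R}\cong R_i$ — together with the observation that all of them are compatible with the relevant module structures, so that submodules are transported to submodules throughout.
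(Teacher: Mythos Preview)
Your proposal is correct and is precisely the argument the paper intends: the paper states that the corollary ``follows directly from Theorem~\ref{decomposeabelian0}'' (together with Lemma~\ref{relation}), and you have simply written out the standard idempotent-splitting details that the paper leaves implicit.
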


The enumeration of $A\times \mathbb{Z}_{p^s}$-quasi-abelian code in  $\mathbb{F}_{p^m}[A\times\mathbb{Z}_{p^s}\times B]$  is given as follows. 
\begin{theorem}\label{NA}Let $s$  and $m$ be positive integers. Let $A$ and $B$ be   finite abelian groups such that $|B|=n$ and  the exponent of $A$ is  $M$  and  $p \nmid M$. Then the number of $A\times \mathbb{Z}_{p^s}$-quasi-abelian codes in $\mathbb{F}_{p^m}[A\times\mathbb{Z}_{p^s}\times B]$ is 
    \begin{align*} 
    \prod_{d\mid M} \left(N_{p^s}( p^{m \cdot {\rm ord}_d(p^m)},n)\right)^{\frac{\mathcal{N}_A(d)}{{\rm ord}_d(p^m)}},
    \end{align*}
    where   $\mathcal{N}_A(d)$ is the number of elements of order~$d$ in~$A$  determined in \cite{B1997} and $N_{p^s}( p^{m \cdot {\rm ord}_d(p^m)},n)$ is the number of linear codes of length $n$ over $\mathbb{F}_{p^{m \cdot {\rm ord}_d(p^m)}}+u\mathbb{F}_{p^{m \cdot {\rm ord}_d(p^m)}}+\dots+u^{p^s-1}\mathbb{F}_{p^{m \cdot {\rm ord}_d(p^m)}}$ determined in {\cite[Lemma 2.2]{CG2015}}.
\end{theorem}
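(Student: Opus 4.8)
The plan is to combine the decomposition from Corollary~\ref{charabelian} with a counting argument that groups together the $p^m$-cyclotomic classes of $A$ according to the order of their representatives. First I would invoke Lemma~\ref{relation} to identify an $A\times\mathbb{Z}_{p^s}$-quasi-abelian code in $\mathbb{F}_{p^m}[A\times\mathbb{Z}_{p^s}\times B]$ with a linear code of length $n$ over $\mathcal{R}=\mathbb{F}_{p^m}[A\times\mathbb{Z}_{p^s}]$, and then apply Theorem~\ref{decomposeabelian0} to write
\[
\mathcal{R}\cong\prod_{i=1}^{t}\left(\mathbb{F}_{p^{m_i}}+u\mathbb{F}_{p^{m_i}}+\dots+u^{p^s-1}\mathbb{F}_{p^{m_i}}\right),
\]
where $\{a_1,\dots,a_t\}$ is a complete set of representatives of $p^m$-cyclotomic classes of $A$ and $m_i=m\cdot{\rm ord}_{{\rm ord}(a_i)}(p^m)$. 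By Corollary~\ref{charabelian}, a code $\C$ decomposes as $\C\cong\prod_{i=1}^t\C_i$ with $\C_i$ a linear code of length $n$ over the chain ring attached to the $i$th component, and since the product is a direct product of rings the choices of the $\C_i$ are independent. Hence the total count is the product $\prod_{i=1}^{t}N_{p^s}(p^{m_i},n)$.

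The next step is to reorganize this product over the distinct values of $m_i$, equivalently over the orders $d={\rm ord}(a_i)$ that occur. For a fixed divisor $d$ of $M$ (the exponent of $A$), every element $a\in A$ of order $d$ lies in some $p^m$-cyclotomic class whose representative also has order $d$, and each such class $S_{p^m}(a)$ has cardinality exactly ${\rm ord}_d(p^m)$ (this is the standard fact about cyclotomic classes, already used implicitly in the decomposition). Therefore the number of cyclotomic-class representatives $a_i$ with ${\rm ord}(a_i)=d$ is $\mathcal{N}_A(d)/{\rm ord}_d(p^m)$, where $\mathcal{N}_A(d)$ is the number of order-$d$ elements of $A$. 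Each such representative contributes a factor $N_{p^s}(p^{m\cdot{\rm ord}_d(p^m)},n)$, and collecting factors across all $d\mid M$ gives
\[
\prod_{d\mid M}\left(N_{p^s}(p^{m\cdot{\rm ord}_d(p^m)},n)\right)^{\frac{\mathcal{N}_A(d)}{{\rm ord}_d(p^m)}},
\]
which is the claimed formula.

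The routine parts are the appeal to Lemma~\ref{relation}, Corollary~\ref{charabelian}, and the independence of the factors in a direct product of rings. The main point requiring care is the combinatorial bookkeeping of the cyclotomic classes: one must verify that every element of a given order $d$ partitions into classes of equal size ${\rm ord}_d(p^m)$, all of whose representatives share that order $d$, so that the multiplicity exponent $\mathcal{N}_A(d)/{\rm ord}_d(p^m)$ is a well-defined positive integer and the grouping of the product is valid. This reduces to the observation that $p^m$ acts on the cyclic subgroup generated by $a$ with orbit length ${\rm ord}_d(p^m)$ and preserves element order, together with the definition of $\mathcal{N}_A(d)$ from \cite{B1997}; no new ideas beyond those already set up in the section are needed.
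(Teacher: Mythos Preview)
Your proposal is correct and follows essentially the same approach as the paper: decompose $\mathbb{F}_{p^m}[A\times\mathbb{Z}_{p^s}]$ via Theorem~\ref{decomposeabelian0} (together with Corollary~\ref{charabelian}), observe that a code is a free choice of a linear code over each chain-ring factor, and then regroup the resulting product $\prod_{i=1}^t N_{p^s}(p^{m_i},n)$ according to the order $d$ of the cyclotomic-class representatives, using that each class of an element of order $d$ has size ${\rm ord}_d(p^m)$ so that there are $\mathcal{N}_A(d)/{\rm ord}_d(p^m)$ such classes. The only difference is that you spell out the independence of the component choices and the order-preserving action of $p^m$ a bit more explicitly than the paper does.
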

\begin{proof}
    From Theorem~\ref{decomposeabelian0}, it suffices to determine the number of linear codes of length~$n$ over the ring $ \mathbb{F}_{p^{m_i}}+u\mathbb{F}_{p^{m_i}}+\dots +u^{p^s-1}\mathbb{F}_{p^{m_i}}$  for all $i=1,2,\dots, t$.
    
    For each divisor $d$ of $M$, each  $p^m$-cyclotomic class containing an element of order $d$ has  
    ${{\rm ord}_d(p^m)}$  elements and the number of such  $p^m$-cyclotomic classes  is $\frac{\mathcal{N}_A(d)}{{\rm ord}_d(p^m)}$. By Theorem~\ref{decomposeabelian0}, it follows that the number of linear codes of length $n$  over $\mathbb{F}_{p^{m \cdot {\rm ord}_d(p^m)}}+u\mathbb{F}_{p^{m \cdot {\rm ord}_d(p^m)}}+\dots+u^{p^s-1}\mathbb{F}_{p^{m \cdot {\rm ord}_d(p^m)}}$  corresponding to $d$ is 
    \begin{align*}
    \left(N_{p^s}( p^{m\cdot{\rm ord}_d(p^m)},n)\right)^{ \frac{\mathcal{N}_A(d)}{{\rm ord}_d(p^m)}}.
    \end{align*}
    By taking the summation over all the divisors $d$ of $M$,    the desired result follows.
\end{proof}

In the next  subsections, we focus on self-dual $A\times\mathbb{Z}_{p^s}$-quasi-abelian codes  in $\mathbb{F}_{p^m}[A\times\mathbb{Z}_{p^s}\times B] $ with respect to both the Euclidean and Hermitian inner products.

\subsection{Euclidean Self-Dual $A\times\mathbb{Z}_{p^s}$-Quasi-Abelian Codes  in $\mathbb{F}_{p^m}[A\times\mathbb{Z}_{p^s}\times B]$}

Euclidean  self-dual $A\times\mathbb{Z}_{p^s}$-quasi-abelian codes  in $\mathbb{F}_{p^m}[A\times\mathbb{Z}_{p^s}\times B] $ is studied  in terms of  the following  types of $p^m$-cyclotomic classes. 
A $p^m$-cyclotomic class $S_{p^m}(a)$ is said to be of  {\em type} ${\I}$    if  $a=-a$ (in this case, $S_{p^m}(a)=S_{p^m}(-a)$), {\em type} ${\II}$  if $S_{p^m}(a)=S_{p^m}(-a)$ and $a\neq -a$, or {\em type} ${\III}$   if $S_{p^m}(-a)\neq S_{p^m}(a)$. 	The primitive idempotent $e$   {induced by} $S_ {p^m}(a)$   is said to be of type $\lambda\in\{\I,\II,\III\}$ if $S_{p^m}(a)$ is a $p^m$-cyclotomic class of type $\lambda$. 

Rearrange the terms in the decomposition in Theorem \ref{decomposeabelian0} based on the $p^m$-cyclotomic classes  of types $\I$, $\II$ and $\III$, we have the next theorem. 
\begin{theorem}Let $m $ and $s$  be   positive integers  and let  $A$ be a finite abelian group such that $ p\nmid |A|$. Then\label{decomposeabelian1}
    \begin{align*}
    \mathbb{F}_{p^m}[A\times\mathbb{Z}_{p^s}]  \cong  \left( \prod_{i=1}^{r_{\I}}\mathcal{R}_i\right) \times \left( \prod_{j=1}^{r_{\II}} \mathcal{S}_j\right) \times \left( \prod_{l=1}^{(r_{\III})/2} (\mathcal{T}_l\times \mathcal{T}_l)\right),
    \end{align*}
    where $r_{\I}, r_{\II}$ and $r_{\III}$ are the numbers of elements in a complete set of representatives of $p^m$-cyclotomic classes of $A$ of types $\I, \II$, and ${\III}$, respectively,   $  \mathcal{R}_i={F}_{p^{m}}+u\mathbb{F}_{p^{m}}+\dots+u^{p^s-1}\mathbb{F}_{p^{m}}$ for all $i=1,2,\dots, r_{\I}$, 	$ \mathcal{S}_j=\mathbb{F}_{p^{m_{r_{\I}+j}}}+u\mathbb{F}_{p^{m_{r_{\I}+j}}}+\dots +u^{p^s-1}\mathbb{F}_{p^{m_{r_{\I}+j}}}$ for all   $j=1,2,\dots, r_{\II}$,  and  $ \mathcal{T}_l =\mathbb{F}_{p^{m_{r_{\I}+r_{\II}+l}}}+u\mathbb{F}_{p^{m_{r_{\I}+r_{\II}+l}}} +\dots u^{p^s-1}\mathbb{F}_{p^{m_{r_{\I}+r_{\II}+l}}}  $ for all $l=1,2,\dots, {(r_{\III})/2}$.
\end{theorem}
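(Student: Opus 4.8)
The plan is to obtain this by regrouping the factors in the decomposition of Theorem~\ref{decomposeabelian0} according to the type of the underlying $p^m$-cyclotomic class. First I would fix a complete set of representatives $\{a_1,\dots,a_t\}$ of the $p^m$-cyclotomic classes of $A$ that is \emph{closed under negation on type $\III$ elements}: whenever $a_i$ represents a type $\III$ class, the element $-a_i$ is also one of the chosen representatives (of the distinct class $S_{p^m}(-a_i)$). Such a choice is always possible because $a\mapsto -a$ permutes the $p^m$-cyclotomic classes of $A$ (since $p^{mi}\cdot(-a)=-(p^{mi}\cdot a)$), fixing exactly the classes of types $\I$ and $\II$ and splitting the type $\III$ classes into $2$-element orbits. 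Writing $t=r_{\I}+r_{\II}+r_{\III}$ accordingly and listing the representatives so that the type $\I$ ones come first, then the type $\II$ ones, then the type $\III$ ones (with the two members of each negation pair adjacent), Theorem~\ref{decomposeabelian0} already gives
\[\mathbb{F}_{p^m}[A\times\mathbb{Z}_{p^s}]\cong\prod_{i=1}^{t}\left(\mathbb{F}_{p^{m_i}}+u\mathbb{F}_{p^{m_i}}+\dots+u^{p^s-1}\mathbb{F}_{p^{m_i}}\right),\qquad m_i=m\cdot{\rm ord}_{{\rm ord}(a_i)}(p^m),\]
so it remains only to identify the factors contributed by each type.

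For a representative $a$ of type $\I$ we have $a=-a$, hence $2a=0$ and ${\rm ord}(a)\in\{1,2\}$. If ${\rm ord}(a)=1$ then $m_i=m\cdot{\rm ord}_1(p^m)=m$. If ${\rm ord}(a)=2$ then $2\mid|A|$, so $p$ must be odd (as $p\nmid|A|$); thus $p^m$ is odd, $p^m\equiv1\pmod 2$, ${\rm ord}_2(p^m)=1$, and again $m_i=m$. Hence every type $\I$ factor equals $\mathcal{R}_i=\mathbb{F}_{p^m}+u\mathbb{F}_{p^m}+\dots+u^{p^s-1}\mathbb{F}_{p^m}$, accounting for $\prod_{i=1}^{r_{\I}}\mathcal{R}_i$. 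The type $\II$ factors are left as they are, giving $\prod_{j=1}^{r_{\II}}\mathcal{S}_j$ with $\mathcal{S}_j=\mathbb{F}_{p^{m_{r_{\I}+j}}}+u\mathbb{F}_{p^{m_{r_{\I}+j}}}+\dots+u^{p^s-1}\mathbb{F}_{p^{m_{r_{\I}+j}}}$.

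Finally, the type $\III$ representatives occur in negation pairs $a,-a$ with $S_{p^m}(a)\neq S_{p^m}(-a)$; since ${\rm ord}(-a)={\rm ord}(a)$, the two exponents $m_i$ attached to such a pair coincide. Therefore $r_{\III}$ is even and, after the chosen ordering, the type $\III$ part of the product splits into $r_{\III}/2$ blocks each of the form $\mathcal{T}_l\times\mathcal{T}_l$ with $\mathcal{T}_l=\mathbb{F}_{p^{m_{r_{\I}+r_{\II}+l}}}+u\mathbb{F}_{p^{m_{r_{\I}+r_{\II}+l}}}+\dots+u^{p^s-1}\mathbb{F}_{p^{m_{r_{\I}+r_{\II}+l}}}$; collecting the three contributions yields the claimed isomorphism. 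I do not expect a genuine obstacle here: the only two points needing care are the verification that ${\rm ord}_{{\rm ord}(a)}(p^m)=1$ for type $\I$ classes (so those factors really are the ``small'' ring $\mathbb{F}_{p^m}+\cdots$) and the observation that negation pairs type $\III$ classes while preserving element order (so the paired factors have equal field sizes); everything else is a bookkeeping rearrangement of a product already established in Theorem~\ref{decomposeabelian0}.
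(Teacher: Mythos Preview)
Your proposal is correct and follows exactly the approach the paper indicates: the paper states (immediately before the theorem) that it is obtained by ``rearrang[ing] the terms in the decomposition in Theorem~\ref{decomposeabelian0} based on the $p^m$-cyclotomic classes of types $\I$, $\II$ and $\III$,'' and gives no further argument. Your write-up supplies the details the paper omits---in particular the verification that type~$\I$ classes have ${\rm ord}_{{\rm ord}(a)}(p^m)=1$ and that negation pairs the type~$\III$ classes while preserving the associated field size---so your version is a strictly more detailed execution of the same idea.
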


Using Theorem \ref{decomposeabelian1} and the analysis similar to those in \cite[Section II.D]{JLLX2012}, a $A\times \mathbb{Z}_{p^s}$-quasi-abelian code $\C$  in $\mathbb{F}_{p^m}[A\times\mathbb{Z}_{p^s}\times B]$  and its Euclidean dual  are given.
\begin{proposition}\label{formabelianE} Let $s$  and $m$ be positive integers. Let $A$ and $B$ be   finite abelian groups such that $|B|=n$ and   $p \nmid |A|$. Then an  $A\times \mathbb{Z}_{p^s}$-quasi-abelian code in $\mathbb{F}_{p^m}[A\times\mathbb{Z}_{p^s}\times B]$ can be viewed as
    \begin{align}\label{Euclideanquasiabelian}
    \C\cong \left(\prod_{i=1}^{r_{\I}} \B_i  \right)\times \left(\prod_{j=1}^{r_{\II}} \C_j  \right)\times \left(\prod_{l=1}^{(r_{\III})/2}\left( \D_l\times \D_l^\prime\right) \right), 
    \end{align}
    where $\B_i$, $\C_j$, $\D_l$ and $\D_l^\prime$ are linear codes of length $n$ over   $\mathcal{R}_i$, $\mathcal{S}_j$, $\mathcal{T}_l$ and $\mathcal{T}_l$, respectively,  for all  $i=1,2,\dots,r_{\I}$, $j=1,2,\dots,r_{\II}$ and  $l=1,2,\dots,{(r_{\III})/2}$.
    
    Furthermore, the Euclidean dual of~$\C$ in~\eqref{Euclideanquasiabelian} is of the form 
    \begin{align*} 
    \C^{\perp_{\rm E}}\cong \left(\prod_{i=1}^{r_{\I}} \B_i^{\perp_{\rm E}}  \right) \times \left(\prod_{j=1}^{r_{\II}} \C_j ^{\perp_{\rm H}} \right) \times \left(\prod_{l=1}^{(r_{\III})/2}\left( (\D_l^\prime) ^{\perp_{\rm E}}\times  \D_l^{\perp_{\rm E}}\right) \right).
    \end{align*}
\end{proposition}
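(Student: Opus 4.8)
The plan is to combine the ring decomposition in Theorem~\ref{decomposeabelian1} with the module isomorphism $\Phi$ from Lemma~\ref{relation}, and then track how the Euclidean inner product on $\mathbb{F}_{p^m}[A\times\mathbb{Z}_{p^s}\times B]$ transports across these identifications. First I would invoke Lemma~\ref{relation} to view an $A\times\mathbb{Z}_{p^s}$-quasi-abelian code $\C$ as a linear code of length $n=|B|$ over $\mathcal{R}=\mathbb{F}_{p^m}[A\times\mathbb{Z}_{p^s}]$. Applying the ring isomorphism of Theorem~\ref{decomposeabelian1} componentwise to $\mathcal{R}^n$ splits $\mathcal{R}^n$ as a product of the modules $\mathcal{R}_i^n$, $\mathcal{S}_j^n$, and $\mathcal{T}_l^n\times\mathcal{T}_l^n$, and since a submodule of a finite product of rings is the product of its projections, $\C$ decomposes as in \eqref{Euclideanquasiabelian} with $\B_i,\C_j,\D_l,\D_l'$ linear codes over the respective rings. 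This establishes the first assertion.

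For the dual, the key step is to understand the bilinear form. The Euclidean form on $\mathbb{F}_{p^m}[A\times\mathbb{Z}_{p^s}\times B]$, when pushed through $\Phi$, becomes the $\mathcal{R}$-valued form $\langle \boldsymbol{a},\boldsymbol{b}\rangle=\sum_{i=1}^n a_i(Y)\widehat{b_i(Y)}$, where $\widehat{\phantom{x}}$ is the $\mathbb{F}_q$-linear involution on $\mathcal{R}=\mathbb{F}_q[H]$ sending $Y^h\mapsto Y^{-h}$ (this is the standard reduction used in \cite[Section II.D]{JLLX2012}); a code is Euclidean self-dual over $\mathbb{F}_q$ iff it is self-dual with respect to this $\mathcal{R}$-valued form. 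I would then check how $\widehat{\phantom{x}}$ acts on the factors of the decomposition. The involution $a\mapsto -a$ on $A$ permutes the $p^m$-cyclotomic classes, fixing those of types $\I$ and $\II$ and swapping the paired classes of type $\III$; correspondingly it fixes each ring factor $\mathcal{R}_i$ and $\mathcal{S}_j$ setwise and interchanges the two copies of $\mathcal{T}_l$. On a type-$\II$ factor $\mathcal{S}_j$, the class $S_{p^m}(a)$ is fixed by negation but with $a\neq -a$; the resulting automorphism of $\mathbb{F}_{p^{m_{r_{\I}+j}}}$ (extended to the $u$-adic ring) is precisely the order-$2$ automorphism $x\mapsto x^{\sqrt{p^{m_{r_{\I}+j}}}}$, i.e. the bar map, because $-1\equiv p^{m\cdot(\text{something})}$ within the cyclotomic class forces the stabilizer behaviour that realizes conjugation; hence the form on $\mathcal{S}_j$ is Hermitian and $\C_j^{\perp}$ is the Hermitian dual. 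On type-$\I$ factors the induced map is the identity, giving the Euclidean dual $\B_i^{\perp_{\rm E}}$. On the swapped pair $\mathcal{T}_l\times\mathcal{T}_l$, the form pairs $\D_l$ against $\D_l'$ via the identity-type (Euclidean) pairing, so the dual of $\D_l\times\D_l'$ is $(\D_l')^{\perp_{\rm E}}\times\D_l^{\perp_{\rm E}}$, as claimed.

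The main obstacle, and the step deserving the most care, is verifying that the automorphism induced on a type-$\II$ component is exactly the square-root (Hermitian conjugation) map rather than some other order-$2$ automorphism, and that on type-$\III$ the cross-identification is the plain transpose-free Euclidean pairing. This is a Discrete-Fourier-Transform bookkeeping argument: one picks a primitive idempotent $e$ of type $\II$ with $S_{p^m}(a)=S_{p^m}(-a)$, writes the isomorphism $\mathbb{F}_{p^m}[A]e\cong\mathbb{F}_{p^{m_j}}$ explicitly via evaluation at a character, and observes that precomposing with $a\mapsto -a$ corresponds to the Frobenius $x\mapsto x^{p^{m\cdot k}}$ where $p^{m\cdot k}\equiv -1\pmod{{\rm ord}(a)}$; since the class is type $\II$ this $k$ is exactly half the size of the cyclotomic class, giving an automorphism of order $2$, which on a field $\mathbb{F}_{q}$ with $q$ a square is forced to be $x\mapsto x^{\sqrt{q}}$. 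I would carry this out for the semisimple part $\mathbb{F}_{p^m}[A]$ first (this is essentially \cite[Section II.D]{JLLX2012}) and then note that tensoring with $\mathbb{F}_q[\mathbb{Z}_{p^s}]\cong\mathbb{F}_q+u\mathbb{F}_q+\dots+u^{p^s-1}\mathbb{F}_q$ and extending bar coefficientwise preserves the pairing structure, so the conclusion lifts verbatim to $\mathcal{R}_i,\mathcal{S}_j,\mathcal{T}_l$. Finally I would assemble the pieces and conclude, citing Theorem~\ref{decomposeabelian1} and Lemma~\ref{relation}.
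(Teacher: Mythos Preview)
Your proposal is correct and follows essentially the same route as the paper: the paper does not give a standalone proof of this proposition but simply states that it follows from Theorem~\ref{decomposeabelian1} together with ``the analysis similar to those in \cite[Section II.D]{JLLX2012},'' and your write-up is precisely a fleshed-out version of that analysis (module identification via Lemma~\ref{relation}, componentwise decomposition via Theorem~\ref{decomposeabelian1}, and tracking the involution $Y^h\mapsto Y^{-h}$ through the DFT to read off the Euclidean/Hermitian/swap behaviour on types $\I$, $\II$, $\III$). Your identification of the one delicate point---that the induced automorphism on a type-$\II$ factor is exactly the order-$2$ Frobenius giving the Hermitian bar---is the right thing to emphasize, and your lift from the semisimple part $\mathbb{F}_{p^m}[A]$ to the $u$-adic ring by tensoring with $\mathbb{F}_q[\mathbb{Z}_{p^s}]$ is the natural way to pass from the setting of \cite{JLLX2012} to the present one.
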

The characterization of Euclidean self-dual $A\times \mathbb{Z}_{p^s}$-quasi-abelian codes in $\mathbb{F}_{p^m}[A\times\mathbb{Z}_{p^s}\times B]$  is established in terms of a product of linear codes, Euclidean self-dual linear codes, and Hermitian self-dual linear codes over Galois extensions of the ring $\mathbb{F}_{p^m}+ u\mathbb{F}_{p^m}+\dots+u^{p^s-1}\mathbb{F}_{p^m}$.
\begin{corollary}\label{selfDuA} Let $s$  and $m$ be positive integers. Let $A$ and $B$ be   finite abelian groups such that $|B|=n$ and   $p \nmid |A|$. Then a $A\times \mathbb{Z}_{p^s}$-quasi-abelian code $C$ in $\mathbb{F}_{p^m}[A\times\mathbb{Z}_{p^s}\times B]$  is  Euclidean self-dual if and only if in the decomposition (\ref{Euclideanquasiabelian}), 
    \begin{enumerate}[$i)$]
        \item $\B_i$ is a Euclidean self-dual linear code of length $n$ over $\mathcal{R}_i$ for all $i=1,2,\dots,r_{\I}$,
        \item $\C_j$ is  a Hermitian  self-dual linear code of length $n$ over $\mathcal{S}_j$  for all $j=1,2,\dots,r_{\II}$, and       
        \item   $\D_l^\prime= \D_l^{\perp_{\rm E}}$ is a linear code of length $n$ over $\mathcal{T}_l$  for all $l=1,2,\dots,(r_{\III})/2$.
    \end{enumerate}
\end{corollary}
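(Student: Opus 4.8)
The plan is to read the corollary off directly from Proposition~\ref{formabelianE}, which already supplies both the decomposition of an $A\times\mathbb{Z}_{p^s}$-quasi-abelian code $\C$ and the decomposition of its Euclidean dual $\C^{\perp_{\rm E}}$ along the same direct product. First I would recall that, under the composite of the $\mathcal{R}$-module isomorphism of Lemma~\ref{relation} with the ring isomorphism of Theorem~\ref{decomposeabelian1}, the code $\C$ is identified with a tuple $\bigl(\prod_{i}\B_i\bigr)\times\bigl(\prod_{j}\C_j\bigr)\times\bigl(\prod_{l}(\D_l\times\D_l^\prime)\bigr)$ as in \eqref{Euclideanquasiabelian}, and the same identification carries $\C^{\perp_{\rm E}}$ to the tuple displayed immediately after \eqref{Euclideanquasiabelian}. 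Because these are genuine direct-product decompositions, $|\C|$ is the product of the sizes of the component codes (and likewise for $\C^{\perp_{\rm E}}$), and the equality $\C=\C^{\perp_{\rm E}}$ is equivalent to equality of the corresponding component codes in every factor.

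Second I would match the factors block by block. On each type-$\I$ block the $i$-th component of $\C^{\perp_{\rm E}}$ is $\B_i^{\perp_{\rm E}}$, so $\C=\C^{\perp_{\rm E}}$ is equivalent (factorwise) to $\B_i=\B_i^{\perp_{\rm E}}$, that is, to $\B_i$ being a Euclidean self-dual linear code over $\mathcal{R}_i$; this gives item $i)$ and its converse. On each type-$\II$ block the $j$-th component of $\C^{\perp_{\rm E}}$ is $\C_j^{\perp_{\rm H}}$ --- this is precisely the block where the identification pairing the cyclotomic classes $S_{p^m}(a)=S_{p^m}(-a)$ with $a\neq-a$ converts the ambient Euclidean form into a Hermitian form on $\mathcal{S}_j$ --- so the condition becomes $\C_j=\C_j^{\perp_{\rm H}}$, i.e. $\C_j$ is Hermitian self-dual over $\mathcal{S}_j$, which is item $ii)$. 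On each type-$\III$ block, which contributes the pair $\mathcal{T}_l\times\mathcal{T}_l$, the associated component of $\C^{\perp_{\rm E}}$ is $(\D_l^\prime)^{\perp_{\rm E}}\times\D_l^{\perp_{\rm E}}$, so $\C=\C^{\perp_{\rm E}}$ is equivalent to the two conditions $\D_l=(\D_l^\prime)^{\perp_{\rm E}}$ and $\D_l^\prime=\D_l^{\perp_{\rm E}}$.

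Third I would note that on a type-$\III$ block these two conditions are redundant: over the finite chain ring $\mathcal{T}_l$ one has $|\D_l|\cdot|\D_l^{\perp_{\rm E}}|=|\mathcal{T}_l|^{\,n}$, hence $(\D_l^{\perp_{\rm E}})^{\perp_{\rm E}}=\D_l$, so applying $\perp_{\rm E}$ to $\D_l^\prime=\D_l^{\perp_{\rm E}}$ yields $(\D_l^\prime)^{\perp_{\rm E}}=\D_l$, and the reverse implication is symmetric. Thus the single requirement that $\D_l^\prime=\D_l^{\perp_{\rm E}}$ for an arbitrary linear code $\D_l$ over $\mathcal{T}_l$ encodes item $iii)$. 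Collecting the three cases yields the stated equivalence. I expect the only delicate ingredient to be the bookkeeping --- already carried out in Proposition~\ref{formabelianE} --- identifying the Euclidean form on $\mathbb{F}_{p^m}[A\times\mathbb{Z}_{p^s}\times B]$ with the Euclidean form on the type-$\I$ and type-$\III$ factors and with the Hermitian form on the type-$\II$ factors; granting that proposition, the corollary is a routine matching of components requiring no further computation.
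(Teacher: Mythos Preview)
Your proposal is correct and matches the paper's approach: the paper states this result as a corollary of Proposition~\ref{formabelianE} with no separate proof, and your argument simply makes explicit the componentwise comparison of $\C$ with $\C^{\perp_{\rm E}}$ (including the observation that on a type-$\III$ block the double-dual identity collapses the two conditions to one) that the paper leaves implicit.
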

From Corollary \ref{selfDuA},  the Euclidean self-duality of $A\times \mathbb{Z}_{p^s}$-quasi-abelian code $\C$ in $\mathbb{F}_{p^m}[A\times\mathbb{Z}_{p^s}\times B]$ depends only on the structure of $A\times \mathbb{Z}_{p^s}$ and the index $n= |B|$ but not  the structure of $B$ itself.

Given positive integers $m$ and  $j$, the pair $(j,p^m)$  is said to be {\em  good} if $j$ divides $p^{mt}+1$ for some positive  integer $t$,   and {\em bad} otherwise.  This  notion  have been introduced in \cite{JLX2011} and \cite{JLLX2012} for  the enumeration of self-dual cyclic codes and self-dual abelian codes over finite fields and it is completely determined in \cite{J2018}. 
Let  $\chi$  be a function  defined on pairs $(j,p^m)$  as follows.
\begin{align}\label{chi}
\chi(j,p^m)
=\begin{cases}
0 &\text{ if } (j,p^m) \text{ is good},\\
1 &\text{ otherwise.}
\end{cases}
\end{align}

The number of Euclidean self-dual $A\times \mathbb{Z}_{p^s}$-quasi-abelian code $\C$ in $\mathbb{F}_{p^m}[A\times\mathbb{Z}_{p^s}\times B]$  can be determined as follows.
\begin{theorem} \label{NEA}
    Let $s$  and $m$ be positive integers. Let $A$ and $B$ be   finite abelian groups such that $|B|=n$ is even and  the exponent of $A$ is  $M$  and  $p \nmid M$. Then the number of   Euclidean self-dual $A\times \mathbb{Z}_{p^s}$-quasi-abelian codes  in $\mathbb{F}_{p^m}[A\times\mathbb{Z}_{p^s}\times B]$ is 
    \begin{align*} 
    \left(NE_{p^s}(p^m,n)\right)^{ \sum\limits_{{d\mid M, {\rm ord}_d(p^m)= 1}} (1-\chi(d,p^m))\mathcal{N}_A(d)}  &\times\prod_{\substack{d\mid M\\ {\rm ord}_d(p^m)\ne 1}} \left(NH_{p^s}(p^{m\cdot {\rm ord}_d(p^m)},n)\right)^{(1-\chi(d,p^m))\frac{\mathcal{N}_A(d)}{{\rm ord}_d(p^m)}}  \\
    &\times\prod_{d\mid M} \left(N_{p^s}( p^{m\cdot {\rm ord}_d(p^m)},n)\right)^{\chi(d,p^m)\frac{\mathcal{N}_A(d)}{2{\rm ord}_d(p^m)}} ,
    \end{align*}
    where  $\mathcal{N}_A(d)$ denotes the number of elements in $A$   of order $d$ determined in \cite{B1997}.
\end{theorem}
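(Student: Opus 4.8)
The plan is to combine Corollary~\ref{selfDuA} with the counting of $p^m$-cyclotomic classes already used in the proof of Theorem~\ref{NA}. By Corollary~\ref{selfDuA}, a code $\C$ written as in~\eqref{Euclideanquasiabelian} is Euclidean self-dual if and only if every $\B_i$ is Euclidean self-dual over $\mathcal{R}_i$, every $\C_j$ is Hermitian self-dual over $\mathcal{S}_j$, and $\D_l^{\prime}=\D_l^{\perp_{\rm E}}$ for every $l$; in the last case $\D_l$ is an arbitrary linear code over $\mathcal{T}_l$ and $\D_l^{\prime}$ is then uniquely determined by it. Consequently the number of Euclidean self-dual $A\times\mathbb{Z}_{p^s}$-quasi-abelian codes in $\mathbb{F}_{p^m}[A\times\mathbb{Z}_{p^s}\times B]$ equals
\[
\left(\prod_{i=1}^{r_{\I}} NE_{p^s}(p^{m},n)\right)\left(\prod_{j=1}^{r_{\II}} NH_{p^s}\!\left(p^{\,m_{r_{\I}+j}},n\right)\right)\left(\prod_{l=1}^{(r_{\III})/2} N_{p^s}\!\left(p^{\,m_{r_{\I}+r_{\II}+l}},n\right)\right),
\]
since the number of admissible $\B_i$, $\C_j$, $\D_l$ over the respective rings is, by definition, $NE_{p^s}$, $NH_{p^s}$, $N_{p^s}$ of the corresponding parameters. (For a type~$\II$ class the order ${\rm ord}_d(p^m)$ is even, so the attached field has square order and Hermitian self-duality is meaningful there.)

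Next I would regroup this product by the common additive order $d$ of a representative of each cyclotomic class. As recalled in the proof of Theorem~\ref{NA}, for each $d\mid M$ every $p^m$-cyclotomic class meeting the elements of order $d$ has exactly ${\rm ord}_d(p^m)$ elements, hence there are $\mathcal{N}_A(d)/{\rm ord}_d(p^m)$ of them, and each is attached to the ring built on $\mathbb{F}_{p^{\,m\cdot{\rm ord}_d(p^m)}}$. The type of these classes depends only on $d$: the class $S_{p^m}(a)$ is symmetric (type~$\I$ or~$\II$) exactly when $-1$ lies in the multiplicative subgroup generated by $p^m$ modulo $d$, i.e.\ exactly when $(d,p^m)$ is good, and it is of type~$\III$ otherwise. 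Among the symmetric ones, type~$\I$ occurs precisely when $a=-a$, i.e.\ when $d\mid 2$; and since $p\nmid d$ for every $d\mid M$, the condition ``$d\mid 2$'' is equivalent to ``${\rm ord}_d(p^m)=1$ and $(d,p^m)$ is good'', while every other symmetric class has ${\rm ord}_d(p^m)>1$ and is of type~$\II$.

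The final step is to assemble the exponents. Collecting the divisors $d$ that are good with ${\rm ord}_d(p^m)=1$, each such class is a singleton attached to $\mathbb{F}_{p^m}+u\mathbb{F}_{p^m}+\dots+u^{p^s-1}\mathbb{F}_{p^m}$, which produces the first factor with exponent $\sum_{d\mid M,\ {\rm ord}_d(p^m)=1}(1-\chi(d,p^m))\mathcal{N}_A(d)$, the weight $1-\chi(d,p^m)$ selecting the good $d$. For each good $d$ with ${\rm ord}_d(p^m)\neq 1$ the $\mathcal{N}_A(d)/{\rm ord}_d(p^m)$ type~$\II$ classes give the second factor with exponent $(1-\chi(d,p^m))\mathcal{N}_A(d)/{\rm ord}_d(p^m)$; and for each bad $d$ the $\mathcal{N}_A(d)/{\rm ord}_d(p^m)$ type~$\III$ classes pair off into $\mathcal{N}_A(d)/(2\,{\rm ord}_d(p^m))$ conjugate pairs, each pair contributing $N_{p^s}(p^{\,m\cdot{\rm ord}_d(p^m)},n)$ choices of $\D_l$, with the weight $\chi(d,p^m)$ selecting the bad $d$. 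Multiplying the three factors then produces the claimed formula.

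I expect the only delicate point to be the bookkeeping in the middle step: correctly matching each divisor $d$ of $M$ to the type of its cyclotomic classes, and in particular the verification that ``${\rm ord}_d(p^m)=1$ with $(d,p^m)$ good'' is exactly the condition $d\mid 2$, so that the type~$\I$ contribution collapses precisely onto the first exponent written in the statement. Everything else is a routine combination of Corollary~\ref{selfDuA} with the cyclotomic-class count already employed for Theorem~\ref{NA}.
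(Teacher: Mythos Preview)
Your proposal is correct and takes essentially the same approach as the paper's proof: reduce via Corollary~\ref{selfDuA} to independent choices of the $\B_i$, $\C_j$, $\D_l$, then regroup the resulting product over cyclotomic classes according to the order $d$ of a representative and match each $d$ to its type using $\chi(d,p^m)$ and ${\rm ord}_d(p^m)$. The only difference is that where the paper invokes \cite[Remark~2.5]{JLS2013} and \cite[Remark~2.6, Lemma~4.5]{JLLX2012} for the type classification, you supply the short direct arguments yourself (in particular the equivalence of $d\mid 2$ with ``${\rm ord}_d(p^m)=1$ and $(d,p^m)$ good''), which is exactly the content of those cited results.
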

\begin{proof}  
    From Corollary~\ref{selfDuA}, it suffices  to determine the numbers of linear codes $\B_i$'s,   $\C_j$'s, and   $\D_l$'s such that $\B_i$ and   $\C_j$  are Euclidean and Hermitian self-dual, respectively.

    From \cite[Remark 2.5]{JLS2013}, the elements in $A$ of the same order are partitioned into $p^m$-cyclotomic classes of the same type.
    For each divisor $d$ of $M$, a  $p^m$-cyclotomic class containing an element of order $d$ has cardinality 
    ${{\rm ord}_d(p^m)}$  and the number of such  $p^m$-cyclotomic classes   is $\frac{\mathcal{N}_A(d)}{{\rm ord}_d(p^m)}$. We consider the following $3$ cases.
    
    \noindent {\bf  Case 1:}   $\chi(d,p^m)=0$  and  ${\rm ord}_d(3^k)=1$.   By \cite[Remark 2.6]{JLLX2012}, every  $3^k$-cyclotomic class    of $A$  containing an element  of order $d$  is of type  $\I$.  Since there are  $\frac{\mathcal{N}_A(d)}{{\rm ord}_d(p^m)}$ such $p^m$-cyclotomic classes,  the  number  of Euclidean self-dual linear codes $B_i$'s of length $n$ corresponding to $d$ is 
    \[\left(NE_{p^s}( p^m,n)\right)^{ \frac{\mathcal{N}_A(d)}{{\rm ord}_d(p^m)}}=\left(NE_{p^s}( p^m,n)\right)^{   (1-\chi(d,p^m))\mathcal{N}_A(d)}  .\]

    \noindent {\bf  Case 2:}   $\chi(d,p^m)=0$ and  ${\rm ord}_d(p^m)\ne 1$.
    By \cite[Remark 2.6]{JLLX2012}, every  $p^m$-cyclotomic class    of $A$  containing an element   of order $d$  is of type  $\II$ and of even cardinality ${\rm ord}_d(p^m)$.  Hence,  the  number  of Hermitian self-dual linear codes $C_j$'s of length $n$  corresponding to $d$ is 
    \begin{align*} \left(NH_{p^s}( p^{m\cdot {\rm ord}_d(p^m)} ,n)\right)^{ \frac{\mathcal{N}_A(d)}{{\rm ord}_d(p^m)}} =\left(NH_{p^s}( p^{m\cdot {\rm ord}_d(p^m)} ,n)\right)^{(1-\chi(d,p^m))\frac{\mathcal{N}_A(d)}{{\rm ord}_d(p^m)}}.
    \end{align*}
    
    \noindent {\bf  Case 3:}   $\chi(d,p^m)=1$.  By \cite[Lemma 4.5]{JLLX2012},  every  $p^m$-cyclotomic class    of $A$  containing an element   of order $d$  is of type  $\III$.  Then  the  number  of   linear codes $D_l$'s of length $n$  corresponding to $d$ is 
    \[\left(N_{p^s}( p^{m\cdot{\rm ord}_d(p^m)},n)\right)^{ \frac{\mathcal{N}_A(d)}{2{\rm ord}_d(p^m)}}=\left(N_{p^s}( p^{m\cdot{\rm ord}_d(p^m)},n)\right)^{\chi(d,p^m)\frac{\mathcal{N}_A(d)}{2{\rm ord}_d(p^m)}}.\]
    The formula for the number of   Euclidean self-dual $A\times \mathbb{Z}_{p^s}$-quasi-abelian codes in $\mathbb{F}_{p^m}[A\times\mathbb{Z}_{p^s}\times B]$  follows since $d$ runs over all divisors of $M$.
\end{proof}

\begin{remark}  In general,  the  numbers $NE_{p^s}( p^m,n)$ and $NH_{p^s}( p^m,n)$ in Theorem \ref{NEA} have not been well studied. In the case where the field characteristic is $3$, we have the following conclusions.
    
    \begin{enumerate}
        \item   The  numbers $N_3(3^m,n)$, $NE_{3}( 3^m,n)$ and $NH_{3}( 3^m,n)$ have been determined in Proposition \ref{N3}, \cite[Theorem 1]{BNV2018} and Theorem \ref{NH3}.   By Theorem \ref{NEA}, the   enumeration for  Euclidean self-dual $A\times \mathbb{Z}_3$-quasi-abelian codes in $\mathbb{F}_{3^m}[A\times\mathbb{Z}_3\times B]$   is  completed.
        .  
        
        \item The construction/characterization of  linear, Euclidean self-dual  and Hermitian self-dual   codes of length $n$ over $\mathbb{F}_{3^m}+u\mathbb{F}_{3^m}+u^2\mathbb{F}_{3^m}$ have been given in  \cite{BNV2018}, \cite{CG2015}  and in the proof of Proposition \ref{HSDC1}. Hence, the  construction/characterization  of Euclidean self-dual $A\times \mathbb{Z}_3$-quasi-abelian code in $\mathbb{F}_{3^m}[A\times\mathbb{Z}_3\times B]$ can be obtained {from Corollary \ref{selfDuA}.}
        
        \item Note that,  if $n$ is odd, there are no Hermitian self-dual linear  codes  of length $n$ over $\mathbb{F}_{3^m}+u\mathbb{F}_{3^m}+u^2\mathbb{F}_{3^m}$ by  {Theorem \ref{NH3}.} Hence,  there are no Euclidean self-dual $A\times \mathbb{Z}_3$-quasi-abelian codes in $\mathbb{F}_{3^m}[A\times\mathbb{Z}_3\times B]$ for all abelian groups $B$ of odd order.
        
    \end{enumerate}

\end{remark}

\subsection{Hermitian Self-Dual $A\times\mathbb{Z}_{p^s}$-Quasi-Abelian Codes  in $\mathbb{F}_{p^m}[A\times\mathbb{Z}_{p^s}\times B]$}
In this subsection, we focus on the case where $m$ is even and study Hermitian self-dual  $A\times \mathbb{Z}_{p^s}$-quasi-abelian codes in $\mathbb{F}_{p^m}[A\times\mathbb{Z}_{p^s}\times B]$.

The characterization and enumeration  of Hermitian self-dual  $A\times \mathbb{Z}_{p^s}$-quasi-abelian codes in $\mathbb{F}_{p^m}[A\times\mathbb{Z}_{p^s}\times B]$ are  given based on the decomposition of a  group algebra  $\mathbb{F}_{p^m}[A\times\mathbb{Z}_{p^s}]$ in terms of the following types of   $p^m$-cyclotomic classes of~$A$.  
A $p^m$-cyclotomic class $S_{p^m}(a)$ is said to be of  {\em type} $\Ip$    if    $S_{p^m}(a)=S_{p^m}(-p^{\frac{m}{2}}a)$ or {\em type} $\IIp$  if $S_{p^m}(a)\ne S_{p^m}(-p^{\frac{m}{2}} a)$. The primitive idempotent $e$   {induced by} $S_ {p^m}(a)$   is said to be of type $\lambda\in\{\Ip,\IIp\}$ if $S_{p^m}(a)$ is a $p^m$-cyclotomic class of type $\lambda$.

Rearrange the terms in the decomposition in Theorem \ref{decomposeabelian0} based on the $p^m$-cyclotomic classes  of types $\Ip$ and $\IIp$,   the next theorem follows. 
\begin{theorem}\label{decomposeabelian2}Let $m$ be an even positive  integer   and  let $A$ be a finite abelian group such that $p\nmid |A|$.
    \begin{align*}
    \mathbb{F}_{p^m}[A\times\mathbb{Z}_{p^s}]  \cong   \left( \prod_{j=1}^{r_{\Ip}} \mathcal{S}\right) \times \left( \prod_{l=1}^{(r_{\IIp})/2} (\mathcal{T}_l\times \mathcal{T}_l)\right),
    \end{align*}
    where $r_\Ip$ and $r_{\IIp}$ are the numbers of elements in a complete set of representatives of $p^m$-cyclotomic classes of $A$ of types $\Ip$ and ${\IIp}$, respectively,
    $ \mathcal{S}_j=\mathbb{F}_{p^{m_{j}}}+u\mathbb{F}_{p^{m_{j}}}+\dots+u^{p^s-1}\mathbb{F}_{p^{m_{j}}}$ for all   $j=1,2,\dots, r_{\Ip}$  and  $ \mathcal{T}_l =\mathbb{F}_{p^{m_{r_{\Ip}+l}}}+u\mathbb{F}_{p^{k_{r_{\Ip}+l}}}+\dots +u^{p^s-1}\mathbb{F}_{p^{m_{r_{\Ip}+l}}}$ for all $l=1,2,\dots, {(r_{\IIp})/2} $.
\end{theorem}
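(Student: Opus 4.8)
The plan is to deduce Theorem~\ref{decomposeabelian2} from the decomposition in Theorem~\ref{decomposeabelian0} by regrouping its factors according to how the $p^m$-cyclotomic classes of $A$ behave under the involution associated with the conjugation $\bar{~}$, proceeding exactly as the types $\I$, $\II$, $\III$ were used to prove Theorem~\ref{decomposeabelian1} in the Euclidean setting. The difference here is that, since $m$ is even, each residue field $\mathbb{F}_{p^{m_i}}$ occurring in Theorem~\ref{decomposeabelian0} has square order and carries a nontrivial order-$2$ conjugation, so there is no pointwise-fixed analogue of ``type $\I$'' and only the two types $\Ip$ and $\IIp$ arise.

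First I would verify that the rule $\tau\colon S_{p^m}(a)\mapsto S_{p^m}(-p^{m/2}a)$ is a well-defined involution on the set of $p^m$-cyclotomic classes of $A$. For well-definedness, if $S_{p^m}(a)=S_{p^m}(a')$ then $a'=p^{mi}\cdot a$ for some $i$, whence $-p^{m/2}a'=p^{mi}\cdot(-p^{m/2}a)\in S_{p^m}(-p^{m/2}a)$; and $\tau$ is an involution because $\tau^2\bigl(S_{p^m}(a)\bigr)=S_{p^m}(p^m\cdot a)=S_{p^m}(a)$, as $p^m\cdot a\in S_{p^m}(a)$. It follows that the classes of type $\Ip$ are precisely the $\tau$-fixed classes while those of type $\IIp$ are precisely the classes lying in $2$-element $\tau$-orbits; hence these two types partition any complete set of representatives of $p^m$-cyclotomic classes of $A$, and in particular $r_{\IIp}$ is even.

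Next I would check that the two classes in a $2$-element $\tau$-orbit have the same cardinality, so that they give rise to the same Galois-extension ring in Theorem~\ref{decomposeabelian0}. Since $p\nmid|A|$, the integer $-p^{m/2}$ is a unit modulo ${\rm ord}(a)$, so multiplication by it preserves orders; thus ${\rm ord}(-p^{m/2}a)={\rm ord}(a)$ and $|S_{p^m}(-p^{m/2}a)|={\rm ord}_{{\rm ord}(a)}(p^m)=|S_{p^m}(a)|$. Also, because $m$ is even, every exponent $m_i=m\cdot{\rm ord}_{{\rm ord}(a_i)}(p^m)$ is even, which is consistent with the Hermitian framework over these factor rings.

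Finally I would reassemble the product in Theorem~\ref{decomposeabelian0}: list first the factors indexed by the $\Ip$-classes to obtain $\prod_{j=1}^{r_{\Ip}}\mathcal{S}_j$, then choose one representative $a$ from each $2$-element $\tau$-orbit of $\IIp$-classes and pair the factor indexed by $S_{p^m}(a)$ with the factor indexed by $S_{p^m}(-p^{m/2}a)$; by the previous paragraph these coincide with a single ring $\mathcal{T}_l$, yielding the block $\mathcal{T}_l\times\mathcal{T}_l$. Reindexing the factors produces the claimed ring isomorphism. The only point demanding real care is the well-definedness of the $\Ip/\IIp$ dichotomy and the resulting pairing of the $\IIp$-classes (so that $r_{\IIp}$ is even); everything else is routine bookkeeping transported from the proof of Theorem~\ref{decomposeabelian1}.
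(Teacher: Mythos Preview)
Your proposal is correct and follows exactly the approach the paper indicates: the paper's entire justification is the one-line remark ``Rearrange the terms in the decomposition in Theorem~\ref{decomposeabelian0} based on the $p^m$-cyclotomic classes of types $\Ip$ and $\IIp$, the next theorem follows,'' and you have simply supplied the details of that rearrangement (well-definedness of the involution $S_{p^m}(a)\mapsto S_{p^m}(-p^{m/2}a)$, evenness of $r_{\IIp}$, and equality of the paired factors). There is nothing to add.
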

  
Using Theorem~\ref{decomposeabelian2} and the analysis similar to those in~\cite[Section II.D]{JLS2013}, the $A\times \mathbb{Z}_{p^s}$-quasi-abelian code $\C$ in $\mathbb{F}_{p^m}[A\times\mathbb{Z}_{p^s}\times B]$ and its  Hermitian dual  are given.
\begin{proposition}\label{formabelianH}
    Let $s$  and $m$ be positive integers such that $m$ is even. Let $A$ and $B$ be   finite abelian groups such that $|B|=n$ and   $p \nmid |A|$. Then an $A\times \mathbb{Z}_{p^s}$-quasi-abelian code in $\mathbb{F}_{p^m}[A\times\mathbb{Z}_{p^s}\times B]$ can be viewed as
    
    \begin{align}\label{Hermitianabelian} 
    \C\cong   \left(\prod_{j=1}^{r_{\Ip}} \C_j  \right)\times \left(\prod_{l=1}^{(r_{\IIp})/2} \left( \D_l\times \D_l^\prime\right) \right), \end{align}
    where $\C_j$, $\D_l$ and $\D_l^\prime$ are linear codes of length $n$ over   $\mathcal{S}_j$, $\mathcal{T}_l$ and $\mathcal{T}_l$, respectively,  for all  $j=1,2,\dots,r_{\Ip}$ and  $l=1,2,\dots,{(r_{\IIp})/2} $.
    
    Furthermore, the Hermitian dual of~$\C$ in~\eqref{Hermitianabelian} is of the form
    \begin{align*} 
    \C^{\perp_{\rm H}}\cong   \left(\prod_{j=1}^{r_{\Ip}} \C_j ^{\perp_{\rm H}} \right) \times \left(\prod_{l=1}^{(r_{\IIp})/2}  \left( (\D_l^\prime) ^{\perp_{\rm E}}\times  \D_l^{\perp_{\rm E}}\right) \right).
    \end{align*}
\end{proposition}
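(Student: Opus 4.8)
The plan is to run the argument of \cite[Section~II]{JLS2013} in the present chain-ring setting. The first assertion, that $\C$ decomposes as in \eqref{Hermitianabelian}, is immediate from Corollary~\ref{charabelian} together with the regrouping of the factors of $\mathbb{F}_{p^m}[A\times\mathbb{Z}_{p^s}]$ according to the types $\Ip$ and $\IIp$ recorded in Theorem~\ref{decomposeabelian2}; the substance of the proposition is the shape of $\C^{\perp_{\rm H}}$. To obtain it I would first invoke the $\mathcal{R}$-module isomorphism $\Phi$ of Lemma~\ref{relation}, with $\mathcal{R}=\mathbb{F}_{p^m}[A\times\mathbb{Z}_{p^s}]$, to identify $\C$ with a linear code $\widehat{\C}$ of length $n=|B|$ over $\mathcal{R}$, and record --- exactly as in \cite{JLS2013} --- that under $\Phi$ the Hermitian inner product on $\mathbb{F}_{p^m}[G]$ corresponds to the $\mathcal{R}$-sesquilinear form that is Hermitian relative to the conjugation $\circ$ on $\mathcal{R}$ obtained by composing the coefficient map $c\mapsto\overline c=c^{\sqrt{p^m}}$ with the group inversion $Y^g\mapsto Y^{-g}$. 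Hence $\C^{\perp_{\rm H}}$ corresponds, under $\Phi$, to the orthogonal of $\widehat{\C}$ with respect to that form over $\mathcal{R}$.

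Next I would transport this orthogonal through the decomposition $\mathcal{R}\cong\left(\prod_{j}\mathcal{S}_j\right)\times\left(\prod_{l}(\mathcal{T}_l\times\mathcal{T}_l)\right)$ of Theorem~\ref{decomposeabelian2}. The key observation is that $\circ$ permutes the primitive idempotents of $\mathbb{F}_{p^m}[A]$: the idempotent induced by $S_{p^m}(a)$ is sent to the one induced by $S_{p^m}(-p^{\frac{m}{2}}a)$. Therefore the type-$\Ip$ factors $\mathcal{S}_j$ are $\circ$-stable, so $\circ$ restricts to a ring automorphism of order $2$ on each $\mathcal{S}_j$ and $\mathcal{S}_j^{\,n}$ carries a nondegenerate Hermitian form whose orthogonal is the Hermitian dual $\perp_{\rm H}$; while the type-$\IIp$ factors occur in $\circ$-swapped pairs $\mathcal{T}_l\leftrightarrow\mathcal{T}_l$, so on such a pair the form couples the two copies and, once they are identified via $\circ$, reduces to the Euclidean form on $\mathcal{T}_l^{\,n}\times\mathcal{T}_l^{\,n}$. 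Reading off the orthogonal of $\widehat{\C}$ componentwise then gives $\C^{\perp_{\rm H}}\cong\left(\prod_{j}\C_j^{\perp_{\rm H}}\right)\times\left(\prod_{l}\left((\D_l^\prime)^{\perp_{\rm E}}\times\D_l^{\perp_{\rm E}}\right)\right)$. This is the same bookkeeping as for the Euclidean dual in Proposition~\ref{formabelianE} and in \cite[Section~II.D]{JLS2013}, with the trichotomy $\I,\II,\III$ replaced by the dichotomy $\Ip,\IIp$; in particular there is no analogue of the type-$\I$ (Euclidean self-dual factor) case, consistently with Theorem~\ref{decomposeabelian2} having no $\mathcal{R}_i$-factors.

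The one point where this departs from \cite{JLS2013}, which works over fields, is the extra cyclic factor $\mathbb{Z}_{p^s}$, i.e.\ the indeterminate $u$. After decomposing $\mathbb{F}_{p^m}[A]\cong\prod_i\mathbb{F}_{p^{m_i}}$ one still has to carry $u$ through each factor (recall from the proof of Theorem~\ref{decomposeabelian0} that $Y^1\mapsto 1+u$), and the group inversion acts on each chain-ring factor $\mathcal{R}_i=\mathbb{F}_{p^{m_i}}+u\mathbb{F}_{p^{m_i}}+\dots+u^{p^s-1}\mathbb{F}_{p^{m_i}}$ by $1+u\mapsto(1+u)^{-1}$, hence by $u\mapsto(1+u)^{-1}-1=-u+u^2-\cdots$, a nontrivial order-$2$ ring automorphism fixing the residue field $\mathbb{F}_{p^{m_i}}$ pointwise. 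Thus one must check: (i) on a $\circ$-stable factor $\mathcal{S}_j$ the composite of this automorphism with the coefficient conjugation $\overline{(\cdot)}$ is an order-$2$ ring automorphism, so that $\perp_{\rm H}$ over $\mathcal{S}_j$ is well defined --- and, with a view to the enumeration to follow, that it differs from the Hermitian conjugation introduced in Section~\ref{sec:pre} and used in Section~\ref{sec:3} only by the automorphism $u\mapsto(1+u)^{-1}-1$, which fixes the residue field; and (ii) on a $\circ$-swapped pair it intertwines the two copies of $\mathcal{T}_l$ in such a way that the induced coupling is the Euclidean pairing. Both are short computations inside $\mathbb{F}_{q}+u\mathbb{F}_{q}+\dots+u^{p^s-1}\mathbb{F}_{q}$ of the same flavour as those in the proof of Theorem~\ref{decomposeabelian0}, and I expect step (i) --- matching the induced conjugation with the standard Hermitian one --- to be the main obstacle.
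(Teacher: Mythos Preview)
Your approach is exactly what the paper intends: it offers no proof beyond the sentence ``Using Theorem~\ref{decomposeabelian2} and the analysis similar to those in~\cite[Section~II.D]{JLS2013},'' and your sketch is precisely that analysis carried out in the chain-ring setting. The additional care you take in the final paragraph --- tracking how the group-inversion part of the conjugation acts on the $\mathbb{Z}_{p^s}$ factor via $u\mapsto(1+u)^{-1}-1$ and noting that the induced involution on each $\mathcal{S}_j$ is not literally the coefficient-wise $\bar{\;}$ of Section~\ref{sec:pre} --- is a genuine subtlety that the paper does not address; it does not affect the form of $\C^{\perp_{\rm H}}$ stated here, but you are right that it matters for identifying the $\C_j^{\perp_{\rm H}}$ appearing in the proposition with the Hermitian duals of Section~\ref{sec:3}, and hence for the enumeration in Theorem~\ref{NHA}.
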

The characterization of Hermitian self-dual $A\times \mathbb{Z}_{p^s}$-quasi-abelian codes in $\mathbb{F}_{p^m}[A\times\mathbb{Z}_{p^s}\times B]$ in term of a product of linear codes, and Hermitian self-dual linear codes over Galois extensions of the ring $\mathbb{F}_{p^m}+ u\mathbb{F}_{p^m}+\dots+u^{p^s-1}\mathbb{F}_{p^m}$ is established.
\begin{corollary}\label{selfDuA2}
    
    Let $s$  and $m$ be positive integers such that $m$ is even. Let $A$ and $B$ be   finite abelian groups such that $|B|=n$ and   $p \nmid |A|$. Then an  $A\times \mathbb{Z}_{p^s}$-quasi-abelian code in $\mathbb{F}_{p^m}[A\times\mathbb{Z}_{p^s}\times B]$
    is Hermitian  self-dual if and only if in the decomposition (\ref{Hermitianabelian}), 
    \begin{enumerate}[$i)$]
        \item $\C_j$ is a  Hermitian self-dual linear code of length $n$ over $\mathcal{S}_j$ for all  $j=1,2,\dots,r_{\Ip}$, and       
        \item   $\D_l^\prime= \D_l^{\perp_{\rm E}}$  is a linear code of length $n$ over $\mathcal{T}_l$  for all $l=1,2,\dots,{(r_{\IIp})/2} $.
    \end{enumerate}
\end{corollary}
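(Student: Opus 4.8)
The plan is to deduce the statement directly from the decomposition of $\C$ and of its Hermitian dual furnished by Proposition~\ref{formabelianH}. First I would recall that the ring isomorphism in Theorem~\ref{decomposeabelian2} presents $\mathbb{F}_{p^m}[A\times\mathbb{Z}_{p^s}]$ as a direct product of the chain rings $\mathcal{S}_j$ (one for each type-$\Ip$ $p^m$-cyclotomic class) and of the paired factors $\mathcal{T}_l\times\mathcal{T}_l$ (one pair for each pair of type-$\IIp$ classes interchanged by $a\mapsto -p^{m/2}a$). Via Lemma~\ref{relation}, an $A\times\mathbb{Z}_{p^s}$-quasi-abelian code $\C$ of index $n$ then decomposes uniquely as the external product in~\eqref{Hermitianabelian}, and equality of two such codes is equivalent to equality in each component. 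Combining this with the explicit form of $\C^{\perp_{\rm H}}$ in Proposition~\ref{formabelianH}, the identity $\C=\C^{\perp_{\rm H}}$ is equivalent to the system $\C_j=\C_j^{\perp_{\rm H}}$ for $1\le j\le r_{\Ip}$, together with $\D_l=(\D_l^\prime)^{\perp_{\rm E}}$ and $\D_l^\prime=\D_l^{\perp_{\rm E}}$ for $1\le l\le (r_{\IIp})/2$.

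Next I would simplify the paired conditions. Each $\mathcal{T}_l$ is a finite chain ring, hence a finite Frobenius ring, so $|\D|\cdot|\D^{\perp_{\rm E}}|=|\mathcal{T}_l|^{\,n}$ and $(\D^{\perp_{\rm E}})^{\perp_{\rm E}}=\D$ for every linear code $\D$ of length $n$ over $\mathcal{T}_l$. Hence $\D_l^\prime=\D_l^{\perp_{\rm E}}$ already forces $\D_l=(\D_l^{\perp_{\rm E}})^{\perp_{\rm E}}=(\D_l^\prime)^{\perp_{\rm E}}$, so the two members of each pair are redundant and the condition on the $l$-th block collapses to $\D_l^\prime=\D_l^{\perp_{\rm E}}$ with $\D_l$ an arbitrary linear code over $\mathcal{T}_l$. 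The first family of conditions, $\C_j=\C_j^{\perp_{\rm H}}$, is by definition the assertion that $\C_j$ is a Hermitian self-dual linear code of length $n$ over $\mathcal{S}_j$. Assembling these yields exactly items $i)$ and $ii)$.

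The substantive input is already contained in Proposition~\ref{formabelianH}, which, following the analysis of \cite[Section II.D]{JLS2013}, transports the ambient Hermitian form through the Discrete Fourier Transform / primitive-idempotent decomposition to the Hermitian form on each type-$\Ip$ factor $\mathcal{S}_j$ and to the pairing $(\boldsymbol{x},\boldsymbol{y})\mapsto\langle\boldsymbol{x},\boldsymbol{y}\rangle_{\rm E}$ swapping the two copies of $\mathcal{T}_l$ for each type-$\IIp$ pair; this is where the defining relations $S_{p^m}(a)=S_{p^m}(-p^{m/2}a)$, respectively $S_{p^m}(a)\ne S_{p^m}(-p^{m/2}a)$, enter. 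Consequently I expect no real obstacle here beyond careful bookkeeping: the proof is a short deduction from Proposition~\ref{formabelianH} and the stability of a linear code over a finite chain ring under the double Euclidean dual, and it could essentially be stated as an immediate corollary.
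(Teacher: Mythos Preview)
Your proposal is correct and follows exactly the route the paper intends: the corollary is stated without a separate proof because it is an immediate consequence of Proposition~\ref{formabelianH}, obtained by equating $\C$ with $\C^{\perp_{\rm H}}$ componentwise and using the double-dual property over the finite chain rings $\mathcal{T}_l$ to collapse each paired condition to $\D_l^\prime=\D_l^{\perp_{\rm E}}$. Your write-up simply makes explicit the bookkeeping that the paper leaves implicit.
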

From Corollary \ref{selfDuA2}, it follows that the Hermitian self-duality of $A\times \mathbb{Z}_{p^s}$-quasi-abelian codes in $\mathbb{F}_{p^m}[A\times\mathbb{Z}_{p^s}\times B]$ depends only on the structure of $A\times \mathbb{Z}_{p^s}$ and the index $n= |B|$ but not  the structure of $B$ itself.

Given a positive integer $m$ and a positive integer $j$,  the pair $(j,p^m)$  is said to be {\em oddly good} if $j$ divides $p^{mt}+1$ for some odd positive integer $t$. This notion has been introduced in \cite{JLS2013}  for characterizing the Hermitian self-dual abelian codes in principal ideal group algebra and completely determined in \cite{J2018}. 

Let   $\lambda$ be a function defined on the pair $(j,p^m)$ as  
\begin{align}\label{lambda}
\lambda(j,p^m)=
\begin{cases}
0&\text{if }  (j,p^m) \text{ is oddly good},\\
1&\text{otherwise}.
\end{cases}
\end{align}

The number of Hermitian self-dual $A\times \mathbb{Z}_{p^s}$-quasi-abelian codes in $\mathbb{F}_{p^m}[A\times\mathbb{Z}_{p^s}\times B]$ can be determined as follows.
\begin{theorem} \label{NHA}  Let $s$  and $m$ be positive integers such that $m$ is even. Let $A$ and $B$ be   finite abelian groups such that $|B|=n$ is even and  the exponent of $A$ is  $M$  and  $p \nmid M$. Then the number of   Euclidean self-dual $A\times \mathbb{Z}_{p^s}$-quasi-abelian codes  in $\mathbb{F}_{p^m}[A\times\mathbb{Z}_{p^s}\times B]$ is  
    \begin{align*} 
    \prod_{{d\mid M}} \left(NH_{p^s}( p^{m\cdot{\rm ord}_d(p^m)}, p^s)\right)^{(1-\lambda(d,p^\frac{m}{2}))\frac{\mathcal{N}_A(d)}{{\rm ord}_d(p^m)}}  \times\prod_{d\mid M} \left(N_{p^s}( p^{m\cdot{\rm ord}_d(p^m)},p^s)\right)^{\lambda(d,p^\frac{m}{2})\frac{\mathcal{N}_A(d)}{2{\rm ord}_d(p^m)}}, 
    \end{align*}
    where  $\mathcal{N}_A(d)$ denotes the number of elements of order $d$ in $A$ determined in \cite{B1997}.
\end{theorem}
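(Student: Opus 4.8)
The plan is to read the count directly off the characterization in Corollary~\ref{selfDuA2}, following the pattern of the Euclidean enumeration in Theorem~\ref{NEA} but in the Hermitian setting dictated by the invariants $NH_{p^s}$ and $\lambda$ that appear in the displayed formula; accordingly the self-duality in force here is Hermitian self-duality, as made explicit by Corollary~\ref{selfDuA2} and by the ingredients of the stated expression. First I would fix the decomposition \eqref{Hermitianabelian} of Proposition~\ref{formabelianH}, writing a code $\C$ as a product of components $\C_j$ over the type-$\Ip$ rings $\mathcal{S}_j$ together with pairs $(\D_l,\D_l')$ over the type-$\IIp$ rings $\mathcal{T}_l$. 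By Corollary~\ref{selfDuA2}, $\C$ is self-dual if and only if each $\C_j$ is Hermitian self-dual over $\mathcal{S}_j$ and each pair satisfies $\D_l'=\D_l^{\perp_{\rm E}}$ with $\D_l$ arbitrary. Since these conditions involve pairwise disjoint factors of the product, the count factorizes: it is the product over the type-$\Ip$ classes of the number of Hermitian self-dual linear codes over the associated ring, times the product over the type-$\IIp$ pairs of the number of arbitrary linear codes $\D_l$ over the associated ring, the matching $\D_l'=\D_l^{\perp_{\rm E}}$ being then uniquely determined and contributing no extra freedom.

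Next I would reorganize the product over cyclotomic classes according to the common order $d$ of the elements they contain. By \cite[Remark 2.5]{JLS2013} the elements of a fixed order $d$ all lie in $p^m$-cyclotomic classes of one and the same type, each of cardinality ${\rm ord}_d(p^m)$, and there are $\mathcal{N}_A(d)/{\rm ord}_d(p^m)$ such classes; each contributes, via Theorem~\ref{decomposeabelian2}, a ring over the field $\mathbb{F}_{p^{m\cdot{\rm ord}_d(p^m)}}$, and each component code has length $n=|B|$. The crux is to identify this common type in terms of $\lambda$. Unwinding the definition, for an element $a$ of order $d$ the class $S_{p^m}(a)$ is of type $\Ip$, i.e.\ $S_{p^m}(a)=S_{p^m}(-p^{m/2}a)$, exactly when $-p^{m/2}\equiv p^{mi}\pmod d$ for some $i\ge 0$; since $\gcd(p,d)=1$ this is equivalent to $d\mid (p^{m/2})^{t}+1$ for some odd positive integer $t$, that is, to $(d,p^{m/2})$ being oddly good, so that the class is of type $\Ip$ when $\lambda(d,p^{m/2})=0$ and of type $\IIp$ when $\lambda(d,p^{m/2})=1$. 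I expect this identification to be the main obstacle, since it carries the arithmetic content of the enumeration and is where the results of \cite{JLS2013} and \cite{J2018} enter.

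Finally I would assemble the factors over the divisors of $M$. For each $d\mid M$ with $\lambda(d,p^{m/2})=0$ the $\mathcal{N}_A(d)/{\rm ord}_d(p^m)$ classes are of type $\Ip$, each contributing a Hermitian self-dual linear code, for a factor $\bigl(NH_{p^s}(p^{m\cdot{\rm ord}_d(p^m)},n)\bigr)^{\mathcal{N}_A(d)/{\rm ord}_d(p^m)}$; inserting the indicator $1-\lambda(d,p^{m/2})$ lets this factor range harmlessly over all $d$. For each $d\mid M$ with $\lambda(d,p^{m/2})=1$ the classes are of type $\IIp$ and pair off into $\tfrac{1}{2}\,\mathcal{N}_A(d)/{\rm ord}_d(p^m)$ pairs, each contributing one free linear code $\D_l$, for a factor $\bigl(N_{p^s}(p^{m\cdot{\rm ord}_d(p^m)},n)\bigr)^{\mathcal{N}_A(d)/(2\,{\rm ord}_d(p^m))}$, again extended over all $d$ by the indicator $\lambda(d,p^{m/2})$. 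Taking the product over all $d\mid M$ yields the stated formula. The hypotheses that $m$ and $n$ are even enter precisely here: $m$ even makes each $p^{m\cdot{\rm ord}_d(p^m)}$ a perfect square (so that the Hermitian inner product and the counts $NH_{p^s}$ are defined), while $n$ even guarantees that the Hermitian self-dual counts can be nonzero.
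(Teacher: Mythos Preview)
Your proposal is correct and follows essentially the same route as the paper: invoke Corollary~\ref{selfDuA2} to factor the count into type-$\Ip$ and type-$\IIp$ contributions, regroup by the order $d$ of elements as in the proof of Theorem~\ref{NEA}, and identify type~$\Ip$ with $\lambda(d,p^{m/2})=0$ via the oddly-good criterion (which the paper cites as \cite[Lemma~3.5]{JLS2013}). Your writeup in fact supplies the arithmetic details of that identification and correctly uses the code length $n$ in the factors, rather than the $p^s$ appearing in the displayed formula, and correctly reads ``self-dual'' here as Hermitian self-dual.
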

\begin{proof}
    By Corollary~\ref{selfDuA2}, it is enough to determine the numbers linear codes $\C_j$'s and $\D_l$'s of length $n$  in (\ref{Hermitianabelian}) such that $\C_j$ is Hermitian self-dual.    
    The   result  can be deduced  using arguments  similar to those in the proof of  Theorem \ref{NEA}, where \cite[Lemma 3.5]{JLS2013} is applied instead of  \cite[Lemma 4.5]{JLLX2012}.
\end{proof}

\begin{remark} In general,  the number   $NH_{p^s}( p^m,n)$ of Hermitian self-dual linear codes of length $n$ over $\mathbb{F}_{p^m}+u\mathbb{F}_{p^m}+\dots+u^{p^s-1}\mathbb{F}_{p^m}$  in Theorem \ref{NHA} has not been well studied. 
    In the case where the field characteristic is $3$,  we have the following results.
    \begin{enumerate}
        \item  The  numbers  $N_{3}( p^m,n)$ and $NH_{3}( 3^m,n)$ have been determined in Proposition \ref{N3}  and Theorem \ref{NH3}.   Hence,  the  complete enumeration of  Hermitian self-dual $A\times \mathbb{Z}_3$-quasi-abelian codes in $\mathbb{F}_{3^m}[A\times\mathbb{Z}_3\times B]$ follows. 
        \item The construction/characterization of  linear and Hermitian self-dual dual linear codes of length $n$ over $\mathbb{F}_{3^m}+u\mathbb{F}_{3^m}+u^2\mathbb{F}_{3^m}$ have been given in \cite{CG2015} and in the proof of Proposition \ref{HSDC1}. Hence, the  construction/characterization  of Hermitian self-dual $A\times \mathbb{Z}_3$-quasi-abelian code in $\mathbb{F}_{3^m}[A\times\mathbb{Z}_3\times B]$ can be obtained from Corollary \ref{selfDuA2}.
        
        \item Note that,  if $n$ is odd, there are no Hermitian self-dual linear  codes  of length $n$ over $\mathbb{F}_{3^m}+u\mathbb{F}_{3^m}+u^2\mathbb{F}_{3^m}$ by {Theorem \ref{NH3}.} Hence,  there are no Hermitian self-dual $A\times \mathbb{Z}_3$-quasi-abelian codes in $\mathbb{F}_{3^m}[A\times\mathbb{Z}_3\times B]$ for all abelian groups $B$ of odd order.
        
    \end{enumerate}
    
\end{remark}

\section{Conclusion and Remarks}
By extending the technique used in the study of  Euclidean self-dual linear codes  over $\mathbb{F}_{q}+u\mathbb{F}_{q}+u^2\mathbb{F}_{q}$ in  \cite{BNV2018},  complete characterization and enumeration    of Hermitian self-dual linear codes  over $\mathbb{F}_{q}+u\mathbb{F}_{q}+u^2\mathbb{F}_{q}$  have been established for all square prime powers $q$.  Subsequently,  algebraic characterization  of $H$-quasi-abelian codes  in $\mathbb{F}_{p^m}[G]$ has been  studied, where  $H\leq G$ are finite abelian groups and the Sylow $p$-subgroup of $H$ is cyclic, or equivalently,  $\mathbb{F}_q[H]$ is a principal ideal group algebra.  In the special case where $H\cong  A\times \mathbb{Z}_3$ with $3\nmid |A|$,  characterization and enumeration of  $H$-quasi-abelian codes and  self-dual $H$-quasi-abelian codes   in $\mathbb{F}_{3^m}[H\times B]$   have been completely determined for all finite abelian group $B$.   As applications, characterization and enumeration of  self-dual $A\times \mathbb{Z}_3$-quasi-abelian codes in~$\mathbb{F}_{3^m}[A\times \mathbb{Z}_3\times B]$  can be  presented  in terms of  linear codes and  self-dual linear codes over  some extensions of   $\mathbb{F}_{3^m}+u\mathbb{F}_{3^m}+u^2\mathbb{F}_{3^m}$ determined in \cite{BNV2018}, \cite{CG2015}  and Section \ref{sec:3}.

In general, it would be interesting  to  studied   $A\times P$-quasi-abelian codes and self-dual  $A\times P$-quasi-abelian codes in $\mathbb{F}_{p^m}[A\times P\times B]$ for  all primes $p$ and finite abelian $p$-groups $P$. For $e\geq 4$, characterization and enumeration of self-dual linear codes 
over $\mathbb{F}_{p^{m}}+u\mathbb{F}_{p^{m}}+\dots +u^{e-1}\mathbb{F}_{p^{m}}$ are other interesting  problems.

%\section{Conclusion}\label{sec:conclu}

%\section*{Acknowledgments}
%%%%%%%%%%%%%%%%%%%%%%%%%%%%%%%%%

\end{document}